\let\csname equation*\endcsname\relax
\let\csname endequation*\endcsname\relax
\def\softd{{\leavevmode\setbox1=\hbox{d}%
\hbox to 1.05\wd1{d\kern-0.4ex{\char039}\hss}}}%cstocs
\def\softt{{\leavevmode\setbox1=\hbox{t}%
\hbox to \wd1{t\kern-0.6ex{\char039}\hss}}}%cstocs
\newcommand{\R}{\mathbb{R}}
\newtheorem{theorem}{Theorem}[section]
\newtheorem{lemma}{Lemma}[section]
\newtheorem{remark}{Remark}[section]
\begin{document}

\title[Schr\"odinger operators exhibiting a spectral transition]
{Spectral analysis of a class of Schr\"odinger operators exhibiting a parameter-dependent spectral transition}

\author{Diana Barseghyan}
\address{Department of Mathematics, University of Ostrava, 30. dubna 22, 70103 Ostrava, Czech Republic}
\address{Nuclear Physics Institute, Academy of Sciences of the Czech Republic,
Hlavn\'{i} 130, 25068 \v{R}e\v{z} near Prague, Czech Republic}
\ead{diana.barseghyan@osu.cz}

\author{Pavel Exner}
\address{Nuclear Physics Institute, Academy of Sciences of the Czech Republic,
Hlavn\'{i} 130, 25068 \v{R}e\v{z} near Prague, Czech Republic}
\address{Doppler Institute, Czech Technical University, B\v{r}ehov\'{a} 7, 11519 Prague, Czech Republic}
\ead{exner@ujf.cas.cz}

\author{Andrii Khrabustovskyi}
\address{Institute of Analysis, Karlsruhe
Institute of Technology, Englerstr. 2, 76131 Karlsruhe, Germany}
\ead{andrii.khrabustovskyi@kit.edu}

\author{Milo\v{s} Tater}
\address{Nuclear Physics Institute, Academy of Sciences of the Czech Republic,
Hlavn\'{i} 130, 25068 \v{R}e\v{z} near Prague, Czech Republic}
\address{Doppler Institute, Czech Technical University, B\v{r}ehov\'{a} 7, 11519 Prague, Czech Republic}
\ead{tater@ujf.cas.cz}

\begin{abstract}
We analyze two-dimensional Schr\"odinger operators with the potential $|xy|^p - \lambda (x^2+y^2)^{p/(p+2)}$ where $p\ge 1$ and $\lambda\ge 0$, which exhibit an abrupt change of its spectral properties at a critical value of the coupling constant $\lambda$. We show that in the supercritical case the spectrum covers the whole real axis. In contrast, for $\lambda$ below the critical value the spectrum is purely discrete and we establish a Lieb-Thirring-type bound on its moments. In the critical case the essential spectrum covers the positive halfline while the negative spectrum can be only discrete, we demonstrate numerically the existence of a ground state eigenvalue.

\vspace{2pc}
\noindent{\it Keywords}: Schr\"{o}dinger operator,  eigenvalue estimates, spectral transition
\end{abstract}

\submitto{J. Phys. A.: Math. Theor.}

\maketitle

%%%%%%%%%%%%%%%%%%%%%%%%%%%%%%%%%%%%%%%
\section{Introduction} \label{s: intro}

One of the problems which attracted attention recently concerns Schr\"odinger operators with potentials dependent on a parameter which exhibit a sudden spectral transition when the value of the parameter passes a critical value. The potential is typically unbounded from below and has narrow channels through which the particle can `escape to infinity' in the supercritical situation. Possibly the best know example of this type is the so-called Smilansky model \cite{Sm04, So04, NS06, ES05, Gu11} and its regular version \cite{BE14}. Another example, which will be the main subject of this paper, is a modification of the well-known potential $|xy|^p$ in $\R^2$ obtained by adding a rotationally symmetric negative component which becomes stronger with the growing radius, see \eqref{operator} below. Recall that without the negative component this potential and its modifications serves to demonstrate the possibility of a purely discrete spectrum in the situation when the classically a!
 llowed volume of the phase space is infinite \cite{Si83, GW11, CR15}.

The mechanism of the spectral transition comes from the balance between the negative part of the potential and the positive contribution to the energy coming from the transverse confinement to a channel narrowing towards infinity. This means that the behavior of the two potential components at large distances from the origin must be properly correlated. In our case this is achieved by considering the following class of operators,
 % ------------- %
 \begin{equation} \label{operator}
 L_p(\lambda)\,:\; L_p(\lambda)\psi= -\Delta\psi + \left( |xy|^p - \lambda (x^2+y^2)^{p/(p+2)} \right)\psi\,, \quad p\ge 1\,,
 \end{equation}
 % ------------- %
on $L^2(\R^2)$, where $(x,y)$ in $\R^2$ are the Cartesian coordinates $(x,y)$ in $\R^2$ and the non-negative parameter $\lambda$ in the second term of the potential will serve to control the transition. Note that $\frac{2p}{p+2}<2$, and consequently, the operator (\ref{operator}) is essentially self-adjoint on $C_0^\infty(\R^2)$ by Faris-Lavine theorem -- cf. \cite{RS75}, Thms.~X.28 and  X.38; in the following the symbol $L_p(\lambda)$ will always mean its closure.

We have found already some properties of these operators in \cite{EB12}, our aim here is to present a deeper spectral analysis. To describe what is know we need the (an)harmonic oscillator Hamiltonian on line,
 % ------------- %
\begin{equation}\label{tildeH}
H_p : H_p u = -u^{\prime\prime} + |t|^p u
\end{equation}
 % ------------- %
on $L^2(\mathbb{R})$ with the standard domain, more exactly, its principal eigenvalue $\gamma_p$; since the potential has a mirror symmetry and the ground state is even, we can equivalently consider the `cut' (an)harmonic oscillator on $L^2(\mathbb{R}_+)$ with Neumann condition at $t = 0$. The eigenvalue is known exactly for $p = 2$ where it equals one as well as for $p\to\infty$ where the potential becomes an infinitely deep rectangular well of width two and $\gamma_\infty = \frac{1}{4}\pi^2$. It is easy to see that the function $p\mapsto \gamma_p$ is continuous and positive on the interval $[1,\infty)$; a numerical solution shows that it reaches the minimum value $\gamma_p \approx 0.998995$ at $p \approx 1.788$.

In the paper \cite{EB12} we have shown that the spectral transition occurs at the value $\lambda_\mathrm{crit}=\gamma_p\:$: the spectrum of $L_p(\lambda)$ is purely discrete and below bounded for $\lambda< \lambda_\mathrm{crit}$, remaining below bounded for $\lambda= \lambda_\mathrm{crit}$, while for $\lambda>\lambda_\mathrm{crit}$ it becomes unbounded from below. We have also derived there crude bounds on eigenvalue sums in the subcritical case. In the present work we are going to establish first that for $\lambda>\lambda_{\mathrm{crit}}$ the spectrum of $L_p(\lambda)$ covers the whole real line. Next we shall analyze in more detail the critical case, $\lambda=\lambda_{\mathrm{crit}}$, showing that one has
 % ------------- %
$$
\sigma_{\mathrm{ess}}(L_p(\lambda_{\mathrm{crit}}))=[0, \infty)\,.
$$
 % ------------- %
The question of existence of a negative discrete spectrum is addressed numerically. We show that there a range of values of $p$ for which the critical operator $L_p(\gamma_p)$ has a single negative eigenvalue. Finally, we return to the subcritical case and establish Lieb-Thirring-type bounds to eigenvalue moments.

%%%%%%%%%%%%%%%%%%%%%%%%%%%%%%%%%%%%%%%%%%%%%%%%
\section{Supercritical case} \label{s:supercrit}
\setcounter{equation}{0}

As indicated, our first main result is the following.

 % ------------- %
\begin{theorem} \label{thm: super}
For any $\lambda>\gamma_p$ we have $\sigma(L_p(\lambda))=\mathbb{R}$.
\end{theorem}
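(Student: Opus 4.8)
The plan is to prove that every real number $\mu$ belongs to $\sigma(L_p(\lambda))$ by exhibiting a Weyl sequence, i.e. a sequence of normalized $\psi_n$ in the operator domain with $\|(L_p(\lambda)-\mu)\psi_n\|\to 0$. The $\psi_n$ will live in one of the narrow channels along the coordinate axes, say the positive $x$-semiaxis, which opens toward infinity because there $|xy|^p$ vanishes while $-\lambda(x^2+y^2)^{p/(p+2)}$ drives the energy down. First I would separate the transverse motion: for fixed $x>0$ the operator $-\partial_y^2+|x|^p|y|^p$ is, after the rescaling $y\mapsto|x|^{-p/(p+2)}y$, unitarily equivalent to $|x|^{2p/(p+2)}H_p$, so with $g$ the normalized principal eigenfunction of $H_p$ its ground state is
\begin{align*}
h_x(y)&=|x|^{p/(2(p+2))}\,g\big(|x|^{p/(p+2)}y\big),\\
\big(-\partial_y^2+|x|^p|y|^p\big)h_x&=\gamma_p\,|x|^{2p/(p+2)}h_x.
\end{align*}

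Projecting the transverse degree of freedom onto $h_x$ produces the effective longitudinal operator $\mathcal H=-\partial_x^2+(\gamma_p-\lambda)\,x^{2p/(p+2)}$ on the halfline. Since $\lambda>\gamma_p$ its potential tends to $-\infty$, and because the exponent $2p/(p+2)$ is strictly below $2$ the associated classical motion escapes to infinity in infinite time; heuristically $\sigma(\mathcal H)=\R$. More to the point, for any $\mu\in\R$ the linear ODE $\mathcal H u=\mu u$ has a real solution $u$, and for large $x$ one has $\mu-(\gamma_p-\lambda)x^{2p/(p+2)}\to+\infty$, so the region is classically allowed and standard WKB asymptotics give the two-sided amplitude bounds $|u(x)|\sim x^{-p/(2(p+2))}$ and $|u'(x)|\sim x^{p/(2(p+2))}$; these will control both the size of the residual and, via the lower bound, the norm of $\psi_n$.

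Next I would take the \emph{adiabatic} product ansatz $\psi_n(x,y)=\chi_n(x)\,u(x)\,h_x(y)$, where $\chi_n$ is a smooth cut-off to a window $[x_n-w_n,x_n+w_n]$ with $x_n\to\infty$ and width $w_n=x_n^\beta$ for a fixed $\beta\in\big(\tfrac{p}{p+2},1\big)$. Writing $\phi_n=\chi_n u$ and using the transverse eigenvalue equation, a direct computation yields
\begin{align*}
(L_p(\lambda)-\mu)\psi_n&=\big(\mathcal H\phi_n-\mu\phi_n\big)h_x-\lambda\big[(x^2+y^2)^{p/(p+2)}-x^{2p/(p+2)}\big]\phi_n h_x\\
&\quad-2\phi_n'\,\partial_x h_x-\phi_n\,\partial_x^2 h_x,
\end{align*}
in which the first term is merely the commutator $-2\chi_n'u'-\chi_n''u$ of $\mathcal H$ with the cut-off, the second is the error from replacing the radial potential by its on-axis value, and the last two are the non-adiabatic corrections from the $x$-dependence of the transverse profile. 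The scaling form of $h_x$ gives transverse norms $\|\partial_x h_x\|\sim x^{-1}$ and $\|\partial_x^2 h_x\|\sim x^{-2}$, while the expansion $(x^2+y^2)^{p/(p+2)}-x^{2p/(p+2)}\approx\tfrac{p}{p+2}\,x^{2p/(p+2)-2}y^2$ together with $\int_{\R}y^4|h_x(y)|^2\,\D y\sim x^{-4p/(p+2)}$ makes the cross term of relative size $O(x_n^{-2})$. Combined with the WKB bounds and the choice $\beta\in\big(\tfrac{p}{p+2},1\big)$, each of the four contributions is $o(\|\psi_n\|)$ as $x_n\to\infty$, so $\mu\in\sigma(L_p(\lambda))$, and $\mu$ being arbitrary, $\sigma(L_p(\lambda))=\R$.

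The main obstacle is the tension in the choice of the window width. The longitudinal cut-off commutator forces $w_n$ to exceed the local oscillation scale $x_n^{p/(p+2)}$, whereas keeping the transverse profile \emph{frozen} across the window would demand the opposite inequality and fails precisely for $p\ge 1$. The resolution, and the technical heart of the argument, is to let $h_x$ vary with $x$: its slow variation $\|\partial_x h_x\|\sim x^{-1}$ renders the non-adiabatic terms harmless and, crucially, removes the frozen-state mismatch altogether, so that only errors vanishing relative to $\|\psi_n\|$ survive for every admissible $\beta$.
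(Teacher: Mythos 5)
Your proposal is correct and follows essentially the same route as the paper: a Weyl sequence supported in a narrowing channel, built as the product of the scaled transverse ground state of $H_p$ with a longitudinal oscillation whose local momentum $\sqrt{(\lambda-\gamma_p)\,x^{2p/(p+2)}+\mu}$ balances the net effective potential $(\gamma_p-\lambda)x^{2p/(p+2)}$, cut off on windows escaping to infinity. The paper realizes the longitudinal factor as an explicit constant-amplitude phase $\mathrm{e}^{i\epsilon_\mu(y)}$ --- exactly the WKB phase of your effective operator $\mathcal{H}$ --- and verifies all error terms by direct computation, whereas you take an exact generalized eigenfunction with Liouville--Green amplitude bounds; this is a technical variant of the same argument, not a different one.
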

 % ------------- %
\begin{proof}
To demonstrate that any real number $\mu$ belongs to essential spectrum of operator $L_p$ we are going to use Weyl's criterion: we have to find a sequence $\{\psi_k\}_{k=1}^\infty\subset D(L_p)$ such that $\|\psi_k\|=1$ which contains no convergent subsequence and
 % ------------- %
$$
\|L_p\psi_k-\mu\psi_k\|\to 0\quad\text{as}\quad k\to\infty\,.
$$
 % ------------- %
For the sake of clarity let us first show that $0\in\sigma_{\mathrm{ess}}(L_p)$. We define
 % ------------- %
 \begin{equation}\label{sequence}
 \psi_k(x,y):=\frac{1}{k^{1/(p+2)}}\, h_p\left(x y^{p/(p+2)}\right)\,\e^{i\beta y^{(2p+2)/(p+2)}}\chi\left(\frac{y}{k}\right)\,,
 \end{equation}
 % ------------- %
where $h_p$ is the ground state eigenfunction of $H_p$, $\:\chi$ is a smooth function with $\mathrm{supp}\,\chi\subset[1, 2]$ satisfying $\int_1^2\chi^2(z)\,\mathrm{d}z=1$, and $\beta>0$ will be chosen later. We note that for a given $k$ one can achieve that $\|\psi_k\|_{L^2(\mathbb{R}^2)} \ge\frac{1}{2^{p/(p+2)}}$ as the following estimates show,
 % ------------- %
\begin{eqnarray}
\lefteqn{\int_{\mathbb{R}^2}\left|\frac{1}{k^{1/(p+2)}}\,h_p(x y^{p/(p+2)})\,\e^{i\beta y^{(2p+2)/(p+2)}}\chi\left(\frac{y}{k}\right)
\right|^2\,\mathrm{d}x\,\mathrm{d}y} \nonumber \\ && =\frac{1}{k^{2/(p+2)}}\int_k^{2k}\int_{\mathbb{R}}
\left|h_p(xy^{p/(p+2)})\,\chi\left(\frac{y}{k}\right)\right|^2\,\mathrm{d}x\,\mathrm{d}y \nonumber \\ && =\frac{1}{k^{2/(p+2)}}\int_k^{2k}\int_{\mathbb{R}}\frac{1}{y^{p/(p+2)}}\left|h_p(t)\,\chi\left(\frac{y}{k}\right)\right|^2\,\mathrm{d}t\,\mathrm{d}y
\nonumber \\ && =\frac{1}{k^{2/(p+2)}}\int_{\mathbb{R}}|h_p(t)|^2\,\mathrm{d}t\,\int_k^{2k}\frac{1}{y^{p/( p+2)}}\left|\chi\left(\frac{y}{k}\right)\right|^2\,\mathrm{d}y \nonumber \\ &&  =\frac{1}{k^{2/(p+2)}}\int_k^{2k}\frac{1}{y^{p/(p+2)}}\left|\chi\left(\frac{y}{k}\right)\right|^2\,\mathrm{d}y \nonumber \\ && \label{firstpart} \ge\frac{1}{2^{p/(p+2)}}\int_1^2|\chi(z)|^2\,\mathrm{d}z=\frac{1}{2^{p/(p+2)}}\,.
\end{eqnarray}
 % ------------- %
Our next aim is to show that for any positive $\varepsilon$ one can find $k=k(\varepsilon)$ such that  $\|L_p\psi_k\|_{L^2(\mathbb{R}^2)}^2 <\varepsilon$ holds. By a straightforward calculation one gets
 % ------------- %
$$
\frac{\partial^2\psi_k}{\partial x^2}=\frac{1}{k^{1/(p+2)}}\, y^{2p/(p+2)}\, h_p''(x y^{p/(p+2)})\,\e^{i\beta y^{(2p+2)/(p+2)}}\,\chi\left(\frac{y}{k}\right)
$$
 % ------------- %
and
 % ------------- %
\begin{eqnarray}
\lefteqn{\frac{\partial^2\psi_k}{\partial y^2}
=\frac{1}{k^{1/(p+2)}}\,\e^{i\beta y^{(2p+2)/(p+2)}}\biggl(-\frac{2px}{(p+2)^2}\, y^{-(p+4)/(p+2)}\,h_p'(x y^{p/(p+2)})\,\chi\left(\frac{y}{k}\right)}
\nonumber \\ && +\frac{p^2 x^2}{(p+2)^2}\, y^{-4/(p+2)}\,h_p''(x y^{p/(p+2)})\,\chi\left(\frac{y}{k}\right)
\nonumber \\ && +\frac{i p (4p+4) \beta x}{(p+2)^2}\,y^{(p-2)/(p+2)}\,h_p'(x y^{p/(p+2)})\,\chi\left(\frac{y}{k}\right)\nonumber \\ &&
+\frac{2px}{k( p+2)}y^{-2/(p+2)}\,h_p'(x y^{p/(p+2)})\,\chi'\left(\frac{y}{k}\right)
\nonumber \\ && +\frac{i \beta(2p+2)p}{(p+2)^2}\, y^{-2/(p+2)}\, h_p(x y^{p/(p+2)})\,\chi\left(\frac{y}{k}\right)
\nonumber \\ &&
+\frac{2i \beta(2p+2)}{(p+2)k}y^{p/(p+2)}\, h_p(x y^{p/(p+2)})\,\chi'\left(\frac{y}{k}\right)+\frac{1}{k^2}\,h_p(x y^{p/(p+2)})\,\chi''\left(\frac{y}{k}\right)\biggr)\nonumber \\ && -\frac{ \beta^2(2p+2)^2}{(p+2)^2}\,y^{2p/(p+2)}\,h_p(x y^{p/(p+2)})\,\chi\left(\frac{y}{k}\right). \label{calculations}
\end{eqnarray}
 % ------------- %
Our aim is to show that choosing $k$ sufficiently large one can make most terms at the right-hand side of (\ref{calculations}) as small as we wish. Changing the integration variables, we get for the first term the following estimate,
 % ------------- %
\begin{eqnarray*}
\lefteqn{\int_{\mathbb{R}^2}\left|\frac{x}{k^{1/(p+2)}\, y^{(p+4)/(p+2)}}\,h_p'(x y^{p/(p+2)})\,\e^{i\beta y^{(2p+2)/(p+2)}}\,\chi\left(\frac{y}{k}\right)\right|^2\,\mathrm{d}x\,\mathrm{d}y} \\ && =\frac{1}{k^{2/(p+2)}}\int_k^{2k}\int_{\mathbb{R}}
\left|\frac{x}{y^{(p+4)/(p+2)}}\,h_p'(x y^{p/(p+2)})\,\chi\left(\frac{y}{k}\right)\right|^2\,\mathrm{d}x\,\mathrm{d}y
 \\ && =\frac{1}{k^{2/(p+2)}}\int_k^{2k}\frac{1}{y^{(5p+8)/(p+2)}}
\left|\chi\left(\frac{y}{k}\right)\right|^2\,\mathrm{d}y\,\int_{\mathbb{R}}t^2\,|h_p'(t)|^2\,\mathrm{d}t \\ && \le\frac{1}{k^4}
\int_1^2|\chi(z)|^2\mathrm{d}z\,\int_{\mathbb{R}}t^2\,|h_p'(t)|^2\,\mathrm{d}t\,,
\end{eqnarray*}
 % ------------- %
where the right-hand side tends to zero as $k\to\infty$. In the same way we establish that for large enough $k$ all the terms in (\ref{calculations}) except the last one can be made small. The last term is not small, what is important that it asymptotically compensates with the negative part of the potential; using the same technique one can prove that for large $k$ the integral
 % ------------- %
$$
\frac{1}{k^{2/(p+2)}}\int_{\mathbb{R}^2}\biggl((x^2+y^2)^{p/(p+2)}-y^{2p/(p+2)}\biggr)^2 h_p^2(x y^{p/(p+2)})\,\chi^2\left(\frac{y}{k}\right)\,\mathrm{d}x\,\mathrm{d}y
$$
 % ------------- %
is small again as small as we wish. Consequently, for any fixed $\varepsilon>0$ one can choose $k$ large enough such that
 % ------------- %
\begin{eqnarray*}
\lefteqn{\int_{\mathbb{R}^2}|L_p\psi_k|^2(x,y)\,\mathrm{d}x\,\mathrm{d}y}  \\ &&
=\int_{\mathbb{R}^2}\left|-\frac{\partial^2\psi_k}{\partial x^2}-
\frac{\partial^2\psi_k}{\partial y^2}+|x y|^p\psi_k-\lambda (x^2+y^2)^{p/(p+2)}\psi_k\right|^2\,
\mathrm{d}x\,\mathrm{d}y \\ && \le\frac{1}{k^{2/(p+2)}}\int_{k}^{2k}\int_{\mathbb{R}}\biggl|y^{2p/(p+2)}\, h''_p(x y^{p/(p+2)})\chi\left(\frac{y}{k}\right) \\ && -\frac{\beta^2 (2p+2)^2}{(p+2)^2}\, y^{2p/(p+2)}\, h_p(x y^{p/(p+2)})\chi\left(\frac{y}
{k}\right) \\ &&  -|x y|^p\,h_p(x y^{p/(p+2)})\chi\left(\frac{y}{k}\right)
+\lambda y^{2p/(p+2)}\, h(x y^{p/(p+2)})\,\chi\left(\frac{y}{k}\right)\biggr|^2\,\mathrm{d}x\,\mathrm{d}y+\varepsilon \\ &&
=\frac{1}{k^{2/(p+2)}}\int_{k}^{2k}\int_{\mathbb{R}}\biggl|y^{2p/(p+2)}\biggl(h_p''(x y^{p/(p+2)})-|x y^{p/(p+2)}|^p\, h_p(x y^{p/(p+2)}) \\ && -\frac{\beta^2 (2p+2)^2}{(p+2)^2}\, h_p(x y^{p/(p+2)})+\lambda h_p(x y^{p/(p+2)})\biggr)\chi\left(\frac{y}{k}\right)\biggr|^2\,\mathrm{d}x\,\mathrm{d}y+\varepsilon\,.
\end{eqnarray*}
 % ------------- %
Combining this result with the fact that $H_p h_p= \gamma_p h_p$ and choosing
 % ------------- %
\begin{equation} \label{beta-super}
\beta=\frac{(p+2)}{2p+2}\sqrt{\lambda-\gamma_p}
\end{equation}
 % ------------- %
we get
 % ------------- %
\begin{equation}
\label{final}
\int_{\mathbb{R}^2}|L_p\psi_k|^2(x,y)\,\mathrm{d}x\,\mathrm{d}y\le\varepsilon\,.
\end{equation}
 % ------------- %
To complete this part of the proof we fix a sequence $\{\varepsilon_j\}_{j=1}^\infty$ such that $\varepsilon_j\searrow0$ holds as $j\to\infty$ and to any $j$ we construct a function $\psi_{k(\varepsilon_j)}$ such that the supports for different $j$'s do not intersect each other; this can be achieved by choosing each next $k(\varepsilon_j)$ large enough. The norms of $L_p\psi_{k(\varepsilon_j)}$ satisfy the inequality (\ref{final}) with $\varepsilon_j$ on the right-hand side, and by construction the sequence $\psi_{k(\varepsilon_j)}$ converges weakly to zero; this yields the sought Weyl sequence for zero energy.

Passing now to an arbitrary nonzero real number $\mu$ we can use the same procedure replacing the above functions $\psi_k$ by
 % ------------- %
\begin{equation} \label{superweyl}
\psi_k(x,y)=\frac{1}{k^{1/(p+2)}}\, h_p(x y^{p/(p+2)})\,\e^{i\epsilon_\mu(y)}\, \chi\left(\frac{y}{k}\right)\,,
\end{equation}
 % ------------- %
where
 % ------------- %
$$
\epsilon_\mu(y):= \displaystyle{\int_{\frac{|\mu|^{(p+2)/2p}(p+2)^{(p+2)/p}}{(2p+2)^{(p+2)/p}\beta^{(p+2)/p}}}^y\sqrt{\frac{(2p+2)^2 \beta^2}{(p+2)^2}\, t^{2p/(p+2)}+\mu}\:\mathrm{d}t}\,,
$$
 % ------------- %
and furthermore, the functions $h_p,\,\chi$ and the number $\beta$ are the same way as above. The second derivatives of those functions are
 % ------------- %
$$
\frac{\partial^2\psi_k}{\partial x^2}=\frac{1}{k^{1/(p+2)}}\, y^{2p/(p+2)}\, h_p''(x y^{p/(p+2)})\,\e^{i\epsilon_\mu(y)}\,
\chi\left(\frac{y}{k}\right)
$$
 % ------------- %
and
 % ------------- %
\begin{eqnarray*}
\lefteqn{\frac{\partial^2\psi_k}{\partial y^2}
=\frac{1}{k^{1/(p+2)}}\,\e^{i\epsilon_\mu(y)}\biggl(\frac{-2p x}{(p+2)^2}\, y^{-(p+4)/(p+2)}\,h_p'(x y^{p/(p+2)})\,\chi\left(\frac{y}{k}\right)} \\ && \hspace{-1.5em}  +\frac{p^2 x^2}{(p+2)^2}\,y^{-4/(p+2)}\,h_p''(x y^{p/(p+2)})\,\chi\left(\frac{y}{k}\right)+\frac{2p x}{k(p+2)}\,y^{-2/(p+2)}\,h_p'(x y^{p/(p+2)})\,\chi'\left(\frac{y}{k}\right) \\ && \hspace{-1.5em}
+i p\,\frac{(2p+2)^2}{(p+2)^3}\,\beta^2 y^{(p-2)/(p+2)}\left(\frac{(2p+2)^2\beta^2}{(p+2)^2}\,y^{2p/(p+2)}+\mu\right)^{-1/2} \,h_p(x y^{p/(p+2)})\,\chi\left(\frac{y}{k}\right) \\ && \hspace{-1.5em} +\frac{2i p x}{(p+2)}\,y^{-2/(p+2)}\left(\frac{(2p+2)^2\beta^2}{(p+2)^2}\,y^{2p/(p+2)}+\mu\right)^{1/2}h_p'(x y^{p/(p+2)})\,\chi\left(\frac{y}{k}\right) \\ && \hspace{-1.5em}
+\frac{2i}{k}\left(\frac{(2p+2)^2\beta^2}{(p+2)^2}\,y^{2p/(p+2)}+\mu\right)^{1/2}\, h_p(x y^{p/(p+2)})\,\chi'\left(\frac{y}{k}\right) \\ && \hspace{-1.5em} +\frac{1}{k^2}\,h_p(x y^{p/(p+2)})\,\chi''\left(\frac{y}{k}\right)
-\left(\frac{(2p+2)^2\beta^2}{(p+2)^2}\,y^{2p/(p+2)}+\mu\right) h_p(x y^{p/(p+2)})\,\chi\left(\frac{y}{k}\right)\biggr).
\end{eqnarray*}
 % ------------- %
It is not difficult to check that for any positive $\varepsilon$ one choose a number $k$ large enough to ensure that the inequality
 % ------------- %
\begin{eqnarray*}
\lefteqn{\biggl\|\frac{\partial^2\psi_k}{\partial y^2} \,\e^{-i\epsilon_\mu(y)}
+\mu\psi_k \,\e^{-i\epsilon_\mu(y)}} \\ && - \e^{-i\beta y^{(2p+2)/(p+2)}}\frac{\partial^2}{\partial y^2}\biggl(\psi_k\, \e^{-i\epsilon_\mu(y)+i\beta y^{(2p+2)/(p+2)}}\biggr)\biggr\|_{L^2(\mathbb{R}^2)}
<\varepsilon
\end{eqnarray*}
% ------------- %
holds. Using further the identity
 % ------------- %
$$
\frac{\partial^2\psi_k}{\partial x^2}\,\e^{-i\epsilon_\mu(y)}= e^{-i\beta y^{(2p+2)/(p+2)}}
\frac{\partial^2}{\partial x^2}\bigl(\psi_k\,\e^{-i\epsilon_\mu(y)+i\beta y^{(2p+2)/(p+2)}}\bigr)
$$
 % ------------- %
we arrive at the estimate
 % ------------- %
\begin{eqnarray*}
\lefteqn{\|L_p\psi_k-\mu\psi_k\|_{L^2(\mathbb{R}^2)}=\bigg\|(L_p\psi_k) e^{-i\epsilon_\mu(y)}
-\mu\psi_k \,e^{-i\epsilon_\mu(y)}\biggr\|_{L^2(\mathbb{R}^2)} } \\ &&  <\biggl\| \e^{i\beta y^{(2p+2)/(p+2)}}L_p\biggl(\psi_k \,\e^{-i\epsilon_\mu(y)+i\beta y^{(2p+2)/(p+2)}}\biggr)\biggr\|_{L^2(\mathbb{R}^2)}+\varepsilon\,;
\end{eqnarray*}
 % ------------- %
now we can use the result of the first part of proof to establish the claim.
\end{proof}

%%%%%%%%%%%%%%%%%%%%%%%%%%%%%%%%%%%%%%%%%%%
\section{Critical case}  \label{s:critical}
\setcounter{equation}{0}

Let us now pass to the case when the parameter value is critical, in other words, consider the operator $L_p(\gamma_p)=-\Delta +(|x y|^p-\gamma_p(x^2+y^2)^{p/(p+2)}),\:p\ge1$, on $L^2(\mathbb{R}^2)$. We shall consider the positive and negative spectrum separately.

\subsection{The essential spectrum}
\setcounter{equation}{0}

First we are going to show that the discreteness is lost in the positive halfline once the coupling constant reaches the critical value.

 % ------------- %
\begin{theorem} \label{thm: crit-ess}
The essential spectrum of $L_p(\gamma_p)$ contains the interval $[0, \infty)$.
\end{theorem}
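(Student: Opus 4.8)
The plan is to verify Weyl's criterion for every $\mu\ge 0$, producing the singular sequence by the same mechanism as in the proof of Theorem~\ref{thm: super}. The key observation is that at criticality the supercritical construction degenerates: there the phase $\e^{i\epsilon_\mu(y)}$ had to supply \emph{both} the term $\frac{(2p+2)^2\beta^2}{(p+2)^2}\,y^{2p/(p+2)}$ that balances the excess $\lambda-\gamma_p$ \emph{and} the spectral shift $\mu$. Now $\lambda=\gamma_p$, so there is no excess to compensate, effectively $\beta=0$, and only the shift $\mu$ must be generated. For $\mu\ge0$ the number $\sqrt{\mu}$ is real and a plain oscillatory factor $\e^{i\sqrt{\mu}\,y}$ suffices; it is precisely the reality of $\sqrt{\mu}$ that confines the argument to the nonnegative halfline, consistent with the anticipated discreteness of the negative spectrum.

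Concretely I would take
$$
\psi_k(x,y):=\frac{1}{k^{1/(p+2)}}\,h_p\!\left(xy^{p/(p+2)}\right)\e^{i\sqrt{\mu}\,y}\,\chi\!\left(\frac{y}{k}\right),
$$
with $h_p$ and $\chi$ exactly as in \eqref{sequence}. Since $|\e^{i\sqrt{\mu}\,y}|=1$, the modulus of $\psi_k$ coincides with that of the function in \eqref{sequence}, so the lower bound $\|\psi_k\|_{L^2(\R^2)}\ge 2^{-p/(p+2)}$ obtained in \eqref{firstpart} continues to hold, and after normalisation the functions have unit norm.

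It remains to prove $\|L_p(\gamma_p)\psi_k-\mu\psi_k\|_{L^2(\R^2)}\to0$. Writing $t=xy^{p/(p+2)}$, the eigenvalue relation $H_ph_p=\gamma_ph_p$ yields $-\partial_x^2\psi_k+|xy|^p\psi_k=\gamma_p\,y^{2p/(p+2)}\psi_k$, whereas the negative part equals $-\gamma_p\,y^{2p/(p+2)}\psi_k$ up to an error whose $L^2$ norm is governed by the same integral of $\bigl((x^2+y^2)^{p/(p+2)}-y^{2p/(p+2)}\bigr)^2$ shown to be negligible in the proof of Theorem~\ref{thm: super}; at criticality these two contributions cancel. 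In $-\partial_y^2\psi_k$ the factor $\e^{i\sqrt{\mu}\,y}$ contributes exactly $+\mu\psi_k$, which cancels the $-\mu\psi_k$ of Weyl's difference, and the leftover consists of the pure $y$-derivatives of $h_p(xy^{p/(p+2)})\chi(y/k)$, which are the $\beta$-free terms already appearing in \eqref{calculations} and shown there to vanish in $L^2$, together with the single cross term $-2i\sqrt{\mu}\,\partial_y\!\bigl[h_p(xy^{p/(p+2)})\chi(y/k)\bigr]k^{-1/(p+2)}$.

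I expect the one genuinely new point, having no counterpart in the $\mu=0$ part of Theorem~\ref{thm: super}, to be the control of this cross term; this will be the main, though routine, obstacle. After the substitution $t=xy^{p/(p+2)}$ its two summands carry the weights $y^{-1}$ and $k^{-1}$, and a computation of the type leading to \eqref{firstpart} bounds each of their $L^2$ norms by $O(k^{-1})$, so they disappear as $k\to\infty$. Assembling these estimates gives $\|L_p(\gamma_p)\psi_k-\mu\psi_k\|_{L^2(\R^2)}\to0$; choosing the scales $k$ along a rapidly increasing sequence makes the supports in $y\in[k,2k]$ pairwise disjoint, so $\psi_k\rightharpoonup0$ and one obtains a genuine Weyl sequence. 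Hence every $\mu\ge0$ lies in $\sigma_{\mathrm{ess}}(L_p(\gamma_p))$, which is the assertion.
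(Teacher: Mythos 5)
Your proposal is correct and follows essentially the same route as the paper: the authors use exactly the Weyl sequence $\psi_k(x,y)=k^{-1/(p+2)}h_p(xy^{p/(p+2)})\,\e^{i\sqrt{\mu}\,y}\chi(y/k)$ (their $\eta_\mu(y)=\sqrt{\mu}\,y$), with the same norm lower bound, the same cancellation of $\gamma_p y^{2p/(p+2)}$ against the negative potential, and the same treatment of the two $\sqrt{\mu}$-cross terms, which indeed appear verbatim in their expansion of $\partial^2\psi_k/\partial y^2$. The only cosmetic difference is that the paper packages the final estimate by multiplying through by $\e^{-i\sqrt{\mu}y}$ and invoking the $\mu=0$ case, whereas you estimate the terms directly; the computations are identical.
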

 % ------------- %
\begin{proof}
The argument is similar to that used in the proof of Theorem\ref{thm: super}, hence we present it briefly with emphasis on the differences. As before we check first that $0\in\sigma_{\mathrm{ess}}(L_p)$ by constructing a Weyl sequence, which is now of the form
 % ------------- %
 $$
 \psi_k(x,y):=\frac{1}{k^{1/(p+2)}}\, h_p\left(x y^{p/(p+2)}\right)\,\chi\left(\frac{y}{k}\right)
 $$
 % ------------- %
with $h_p$ and $\chi$ the same as before. As this nothing but \eqref{sequence} with $\beta=0$, not surprisingly in view of \eqref{beta-super} we can repeat the reasoning with the involved expressions appropriately simplified.

Passing now to an arbitrary nonnegative number $\mu$ we replace \eqref{superweyl} by
 % ------------- %
$$
\psi_k(x,y)=\frac{1}{k^{1/(p+2)}}\, h_p(x y^{p/(p+2)})\,\e^{i \eta_\mu (y)}\, \chi\left(\frac{y}{k}\right)\,,
$$
 % ------------- %
where the functions $h_p,\,\chi$ are again the same way as above and $(\eta^\prime_\mu(y))^2=\mu$. This can be achieved for any $\mu\ge0$ t by choosing $\eta_\mu(y)=\sqrt{\mu} y$, note that the classically allowed region is now the whole halfline instead of the interval entering the definition of $\epsilon_\mu(y)$ above. The second derivatives of the functions $\psi_k$ obtained in this way are
 % ------------- %
$$
\frac{\partial^2\psi_k}{\partial x^2}=\frac{1}{k^{1/(p+2)}}\, y^{2p/(p+2)}\, h_p''(x y^{p/(p+2)})\,\e^{i \sqrt{\mu} y}\,
\chi\left(\frac{y}{k}\right)
$$
 % ------------- %
and
 % ------------- %
\begin{eqnarray*}
\lefteqn{\frac{\partial^2\psi_k}{\partial y^2}
=\frac{1}{k^{1/(p+2)}}\,\e^{i \sqrt{\mu} y}\biggl(\frac{-2p x}{(p+2)^2}\, y^{-(p+4)/(p+2)}\,h_p'(x y^{p/(p+2)})\,\chi\left(\frac{y}{k}\right)} \\ && \hspace{-1.5em}  +\frac{p^2 x^2}{(p+2)^2}\,y^{-4/(p+2)}\,h_p''(x y^{p/(p+2)})\,\chi\left(\frac{y}{k}\right)+\frac{2p x}{k(p+2)}\,y^{-2/(p+2)}\,h_p'(x y^{p/(p+2)})\,\chi'\left(\frac{y}{k}\right) \\ && \hspace{-1.5em}
+\frac{2i \sqrt{\mu} p x}{(p+2)}\,y^{-2/(p+2)}h_p'(x y^{p/(p+2)})\,\chi\left(\frac{y}{k}\right) \\ && \hspace{-1.5em}
+\frac{2i \sqrt{\mu}}{k}\, h_p(x y^{p/(p+2)})\,\chi'\left(\frac{y}{k}\right) \\ && \hspace{-1.5em} +\frac{1}{k^2}\,h_p(x y^{p/(p+2)})\,\chi''\left(\frac{y}{k}\right)
-\mu h_p(x y^{p/(p+2)})\,\chi\left(\frac{y}{k}\right)\biggr).
\end{eqnarray*}
 % ------------- %
One finds easily that for any positive $\varepsilon$ and $k$ large enough we have
 % ------------- %
$$
\biggl\|\frac{\partial^2\psi_k}{\partial y^2} \,\e^{-i \sqrt{\mu} y}
+\mu\psi_k \,\e^{-i \sqrt{\mu} y} - \frac{\partial^2}{\partial y^2}\biggl(\psi_k\, \e^{-i \sqrt{\mu} y}\biggr)\biggr\|_{L^2(\mathbb{R}^2)}<\varepsilon
$$
% ------------- %
and using further the trivial identity $\frac{\partial^2\psi_k}{\partial x^2}\,\e^{-i \sqrt{\mu} y}= \frac{\partial^2}{\partial x^2}\bigl(\psi_k\,\e^{-i \sqrt{\mu} y}\bigr)$ we arrive at
 % ------------- %
$$
\|L_p\psi_k-\mu\psi_k\|_{L^2(\mathbb{R}^2)}=\Big\|\big(L_p\psi_k
-\mu\psi_k\big) \,e^{-i \sqrt{\mu} y}\Big\|_{L^2(\mathbb{R}^2)} <\Big\| L_p\Big(\psi_k \,\e^{-i \sqrt{\mu} y}\bigr)\Big\|_{L^2(\mathbb{R}^2)}+\varepsilon
$$
 % ------------- %
and the result of the first part of proof allows us to establish the claim.
\end{proof}

\subsection{Discreteness of the negative spectrum}
\setcounter{figure}{0}

Next we are going to show that the inclusion $\sigma_\mathrm{ess}(L_p(\gamma_p)) \supset [0, \infty)$ established in Theorem~\ref{thm: crit-ess} is in fact an equality.

 % ------------- %
\begin{theorem} \label{thm:subcrit}
The negative spectrum of $L_p(\gamma_p),\:p\ge1$, is discrete.
\end{theorem}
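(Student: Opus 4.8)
The plan is to establish that $\inf\sigma_{\mathrm{ess}}(L_p(\gamma_p))\ge 0$. Together with Theorem~\ref{thm: crit-ess} this forces $\sigma_{\mathrm{ess}}(L_p(\gamma_p))=[0,\infty)$, so that below zero the spectrum consists only of isolated eigenvalues of finite multiplicity, which is exactly the claim. By Persson's theorem it suffices to show that for every $\varepsilon>0$ there is an $R$ such that the quadratic form
\[
Q[\psi]=\int_{\mathbb{R}^2}\Big(|\nabla\psi|^2+\big(|xy|^p-\gamma_p(x^2+y^2)^{p/(p+2)}\big)|\psi|^2\Big)\,\mathrm{d}x\,\mathrm{d}y
\]
satisfies $Q[\psi]\ge-\varepsilon\|\psi\|^2$ for all $\psi\in C_0^\infty(\mathbb{R}^2\setminus B_R)$, where $B_R$ is the disc of radius $R$.

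I would next localize with an IMS partition of unity $\phi_x^2+\phi_y^2+\phi_b^2=1$ depending on a scale $a=a(R)$: let $\phi_x$ be supported in the horizontal strip $\{|y|<2a\}$, $\phi_y$ in the vertical strip $\{|x|<2a\}$, and $\phi_b$ in the bulk $\{|x|>a,\;|y|>a\}$, with $|\nabla\phi_i|\le C/a$. The IMS formula gives $Q[\psi]=Q[\phi_x\psi]+Q[\phi_y\psi]+Q[\phi_b\psi]-\int(|\nabla\phi_x|^2+|\nabla\phi_y|^2+|\nabla\phi_b|^2)|\psi|^2$, the localization term being bounded below by $-Ca^{-2}\|\psi\|^2$. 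On the support of $\phi_b$ one checks by comparing growth rates (the worst point being $|x|=|y|=a$) that $|xy|^p\ge\gamma_p(x^2+y^2)^{p/(p+2)}$ once $a$ is large, whence $Q[\phi_b\psi]\ge0$.

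The core is the channel estimate. For $\phi_y\psi$, supported in $\{|x|<2a\}$ and, since its support avoids $B_R$, in $\{|y|>\sqrt{R^2-4a^2}\}$, I would discard the nonnegative term $\int|\partial_y(\phi_y\psi)|^2$ and use for each fixed $y$ the scaled ground-state inequality $-\partial_x^2+|y|^p|x|^p\ge\gamma_p|y|^{2p/(p+2)}$ on $L^2(\mathbb{R})$, obtained from $H_ph_p=\gamma_p h_p$ (i.e. $H_p\ge\gamma_p$) by the rescaling $x\mapsto|y|^{-p/(p+2)}x$, applied to the genuine $H^1(\mathbb{R})$ function $u=\phi_y\psi(\cdot,y)$. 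Combining this with the concavity bound $(x^2+y^2)^{p/(p+2)}\le|y|^{2p/(p+2)}+\tfrac{p}{p+2}|y|^{-4/(p+2)}x^2$, valid because $0<p/(p+2)<1$, the leading confinement energy $\gamma_p|y|^{2p/(p+2)}$ cancels the leading radial term exactly, leaving
\[
Q[\phi_y\psi]\ge-\gamma_p\,\tfrac{p}{p+2}\,(2a)^2\,(R^2-4a^2)^{-2/(p+2)}\|\phi_y\psi\|^2,
\]
and an identical bound for $Q[\phi_x\psi]$ by the symmetry $x\leftrightarrow y$.

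Summing the three contributions yields $Q[\psi]\ge-\big(Ca^{-2}+C'a^2R^{-4/(p+2)}\big)\|\psi\|^2$, and the choice $a(R)=R^{1/(p+2)}$ makes both terms $O\big(R^{-2/(p+2)}\big)$, which tends to zero; the Persson criterion then gives $\inf\sigma_{\mathrm{ess}}(L_p(\gamma_p))\ge0$. I expect the main obstacle to be technical rather than conceptual: constructing the partition of unity adapted to the two channels, which narrow at rate $|y|^{-p/(p+2)}$ and $|x|^{-p/(p+2)}$ respectively, while keeping $|\nabla\phi_i|=O(1/a)$, and tuning the single scale $a(R)$ so that bulk positivity, the transverse-cancellation remainder, and the localization error all vanish together. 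The delicate point is precisely that the cancellation of the $O(|y|^{2p/(p+2)})$ terms is exact at criticality, so that only the subleading $O(a^2R^{-4/(p+2)})$ residual survives, which is what renders the negative spectrum discrete.
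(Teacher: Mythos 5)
Your proposal is correct, but it follows a genuinely different route from the paper's proof. The paper uses Neumann bracketing: additional Neumann conditions are imposed on the rectangles $G_n,\widetilde G_n,Q_n,\widetilde Q_n$ whose widths shrink according to \eqref{Decay}, each piece is bounded below by an operator with separated variables, and the transverse problem becomes a Neumann (an)harmonic oscillator on a \emph{finite} interval. Since Neumann bracketing only lowers the spectrum, the paper then needs Lemma~\ref{lemma ground} to show that the finite-interval Neumann ground state cannot dip below $\gamma_p$ by more than $o(k^{-p/2})$ --- this is the main technical effort of its proof. Your IMS localization bypasses that lemma entirely: the slice $(\phi_y\psi)(\cdot,y)$ is a genuine $H^1(\mathbb{R})$ function, so the scaled bound $-\partial_x^2+|y|^p|x|^p\ge\gamma_p|y|^{2p/(p+2)}$ holds exactly, and the only losses are the localization term $O(a^{-2})$ and the concavity remainder $O(a^2R^{-4/(p+2)})$, balanced by the choice $a=R^{1/(p+2)}$. (Your closing worry about partitions adapted to channels narrowing like $|y|^{-p/(p+2)}$ is unfounded: fixed-width strips suffice, exactly as in your own estimate, because the strip half-width enters only through the factor $(2a)^2$.) What the paper's heavier scheme buys in exchange is reuse: the same rectangles and the scaled operators \eqref{un.eq.} are the backbone of the Lieb--Thirring-type bound of Theorem~\ref{thm:subcrit-est}, which a Persson-type argument does not provide.

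One step does need more care than you give it: the invocation of Persson's theorem. Standard formulations assume $V$ bounded below or $V_-$ uniformly in the Kato class; here $V(0,y)=-\gamma_p|y|^{2p/(p+2)}\to-\infty$ along the axes, so they do not apply verbatim. The inequality you actually use, $\inf\sigma_{\mathrm{ess}}(L_p(\gamma_p))\ge\sup_{R}\inf\{Q[\psi]:\psi\in C_0^\infty(\mathbb{R}^2\setminus B_R),\ \|\psi\|=1\}$, is nonetheless true in this setting: since $L_p(\gamma_p)$ is essentially self-adjoint on $C_0^\infty(\mathbb{R}^2)$ (Faris--Lavine, as noted in Section~\ref{s: intro}) and the potential is locally bounded, any point of the essential spectrum admits a singular sequence in $C_0^\infty$, which is bounded in $H^1_{\mathrm{loc}}$ and therefore converges to zero in $L^2_{\mathrm{loc}}$ by Rellich's theorem; cutting such a sequence off outside $B_R$ then costs only commutator terms that vanish as $n\to\infty$. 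Alternatively, one can invoke the general version of Persson's theorem in Agmon's lectures \cite{A82}. With this point made explicit, your argument is complete.
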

 % ------------- %
\begin{proof}[Proof]
By the minimax principle it is sufficient to estimate $L_p$ from below by a self-adjoint operator with a purely discrete negative spectrum. To construct such a lower bound we employ a bracketing argument, imposing additional Neumann conditions at the rectangles $G_n=\{-\alpha_{n+1}<x<\alpha_{n+1}\}\times \{\alpha_n< y<\alpha_{n+1}\}$, $\:\widetilde{G}_n=\{-\alpha_{n+1}<x<\alpha_{n+1}\}\times \{-\alpha_{n+1}< y<-\alpha_n\}$, $\:Q_n=\{\alpha_n< x<\alpha_{n+1}\}\times \{-\alpha_n<y<\alpha_n\}$, and $\widetilde{Q}_n=\{-\alpha_{n+1}< x<-\alpha_n\}\times \{-\alpha_n<y<\alpha_n\},\: n=1,2,\ldots$, together with central square $G_0=(-\alpha_1, \alpha_1)^2$ -- cf.~Fig.~1. Here $\{\alpha_n\}_{n=1}^\infty$ is a monotone sequence such that $\alpha_n\to\infty$ as $n\to\infty$ which will be specified later. In this way we obtain a direct sum of operators with Neumann boundary conditions at the rectangle boundaries which we denote as
 % ------------- %
 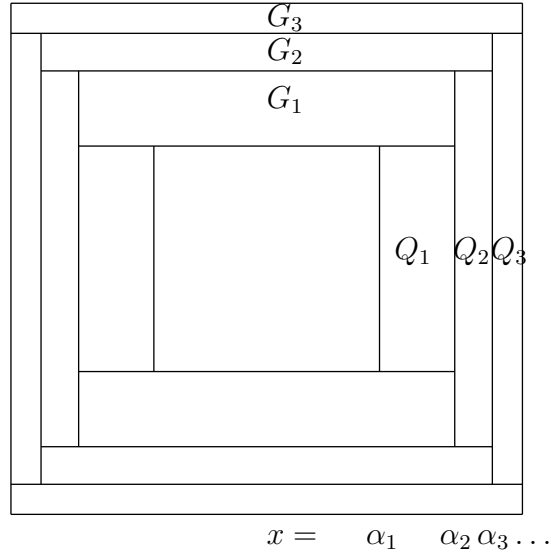
\begin{figure}
 \begin{center}
 \setlength\unitlength{1mm}
 \begin{picture}(120,80)(27,0)
 %\thinlines
 \linethickness{.5pt}
 \put(56,70){\line(1,0){68}}
 \put(60,65){\line(1,0){60}}
 \put(60,15){\line(1,0){60}}
 \put(56,10){\line(1,0){68}}
 \put(60,10){\line(0,1){60}}
 \put(120,10){\line(0,1){60}}
 \put(65,15){\line(0,1){50}}
 \put(115,15){\line(0,1){50}}
 \put(65,55){\line(1,0){50}}
 \put(65,25){\line(1,0){50}}
 \put(75,25){\line(0,1){30}}
 \put(105,25){\line(0,1){30}}
 \put(56,74){\line(1,0){68}}
 \put(56,6){\line(1,0){68}}
 \put(56,6){\line(0,1){68}}
 \put(124,6){\line(0,10){68}}
 \put(90,60){$G_1$}
 \put(90,66.5){$G_2$}
 \put(90,71){$G_3$}
 \put(107,40){$Q_1$}
 \put(114.8,40){$Q_2$}
 \put(119.8,40){$Q_3$}
 \put(90,2){$x=\quad\;\alpha_1$}
 \put(113,2){$\alpha_2$}
 \put(118,2){$\alpha_3$}
 \put(123,2){$\dots$}
 \end{picture}
 \caption{The Neumann bracketing scheme}
 \end{center}
 \end{figure}
 % ------------- %

 % ------------- %
$$
L^{(1)}_{n, p}=L_p|_{G_n}, \quad \widetilde{L}^{(1)}_{n, p}=L_p|_{\widetilde{G}_n},\quad L^{(2)}_{n, p}=L_p|_{Q_n}, \quad \widetilde{L}^{(2)}_{n, p}=L_p|_{\widetilde{Q}_n}
$$
 % ------------- %
and $L^0_p=L_p|_{G_0}$. It is obvious that the spectra of $L_{n, p}^{(i)},\;\tilde{L}^{(i)}_{n, p},\; i=1, 2$, and $L_p^0$ are purely discrete, hence one needs to check that $\underline{\lim}_{n\to\infty}\,\inf\,\sigma\big(L^{(i)}_{n,p}\big)\ge 0$ and $\underline{\lim}_{n\to\infty}\,\inf\, \big(\sigma(\tilde{L}^{(i)}_{n, p}\big)\ge 0$ holds for $i=1,2$, since then the spectra of all the direct sums $\bigoplus_{n=1}^\infty L^{(i)}_{n, p}$ and $\bigoplus_{n=1}^\infty \tilde{L}^{(i)}_{n, p},\; i=1,2$, below any fixed negative number contain a finite number of eigenvalues, the multiplicity taken into account, which implies the sought claim.

Furthermore, the goal will be achieved if we estimate $L^{(i)}_{n, p},\; \tilde{L}_{n, p}^{(i)},\; i=1,2$, from below by operators with separated variables and prove the analogous limiting relations for them. We use the lower bounds
 % ------------- %
\begin{equation}\label{Hnp1}
H^{(1)}_n\psi =-\Delta \psi+ (\alpha_n^p |x|^p-\gamma_p(x^2+\alpha_{n+1}^2)^{p/(p+2)})\psi
\end{equation}
 % ------------- %
on $L^2(-\alpha_{n+1}, \alpha_{n+1}) \otimes L^2(\alpha_n, \alpha_{n+1})$ with the boundary conditions
 % ------------- %
$$
\left. \frac{\partial\psi}{\partial x}\right|_{x=-\alpha_{n+1}}= \left.
\frac{\partial\psi}{\partial x}\right|_{x=\alpha_{n+1}}=0 \,,
$$$$
\left. \frac{\partial\psi}{\partial y}\right|_{y=\alpha_n}= \left.
\frac{\partial\psi}{\partial y}\right|_{y=\alpha_{n+1}}=0 \,.
$$
 % ------------- %
It is clear that the spectra of $H^{(1)}_{n, p},\;n=1,2,\ldots$, are purely discrete; we are going to check that
 % ------------- %
\begin{equation}\label{lim}
\underline{\lim}_{n\to\infty}\inf\,\sigma\big(H_{n, p}^{(1)}\big)\ge 0\,.
\end{equation}
 % ------------- %
Since the lowest Neumann eigenvalue of $-\frac{\mathrm{d}^2}{\mathrm{d}y^2}$ on the interval is zero corresponding to a constant eigenfunction, the problem reduces to analysis of the  operator $h^{(1)}_{n, p}=-\frac{\mathrm{d}^2}{\mathrm{d}x^2}+\alpha_n^p |x|^p-\gamma_p (x^2+ \alpha_{n+1}^2)^{p/(p+2)}$ on $L^2(-\alpha_{n+1}, \alpha_{n+1})$. Using a simple scaling transformation, one can check that $h_{n, p}^{(1)}$ is unitarily equivalent to
 % ------------- %
\begin{equation}\label{un.eq.}
h_{n,p}^{(2)}=\alpha_n^{2p/(p+2)}\,\left(-\frac{\mathrm{d}^2}{\mathrm{d}x^2}+|x|^p
-\frac{\gamma_p}{\alpha_n^{2p/(p+2)}}\left(\frac{x^2}{\alpha_n^{2p/(p+2)}}
+\alpha_{n+1}^2\right)^{p/(p+2)}\right)
\end{equation}
 % ------------- %
on the interval $\big(-\alpha_{n+1}\,\alpha_n^{p/(p+2)}, \alpha_{n+1}\,\alpha_n^{p/(p+2)}\big]$ with Neumann boundary conditions at its endpoints. To proceed we need to specify the sequence $\{\alpha_n\}$. Let us assume that
 % ------------- %
\begin{equation}\label{Decay}
\alpha_{n+1}^{2p/(p+2)}-\alpha_n^{2p/(p+2)}\to 0\quad\text{as}\quad n\to\infty\,.
\end{equation}
 % ------------- %
Combining this assumption with the inequality
 % -------------%
 \begin{eqnarray*}
\lefteqn{\frac{\gamma_p}{\alpha_n^{2p/(p+2)}}\left(\left(\frac{x^2}{\alpha_n^{2p/(p+2)}}+\alpha_{n+1}^2\right)^{p/(p+2)}
-\alpha_{n+1}^{2p/(p+2)}\right)} \\ && \le\frac{\gamma_p}{\alpha_n^{2p/(p+2)}}\left(\frac{x^2}{\alpha_n^{2p/(p+2)}}\right)^{p/(p+2)} \le\frac{\gamma_p}{\alpha_n^{4p(p+1)/(p+2)^2}}\: (|x|^p+1)\,,
\end{eqnarray*}
 % ------------- %
we infer that
 % ------------- %
\begin{eqnarray}
\lefteqn{h_{n,p}^{(2)}=\alpha_n^{2p/(p+2)}\,\biggl(-\frac{\mathrm{d}^2}{\mathrm{d}x^2}
+|x|^p-\frac{\gamma_p\,\alpha_{n+1}^{2p/(p+2)}}{\alpha_n^{2p/(p+2)}}} \nonumber \\ && \quad
-\frac{\gamma_p}{\alpha_n^{2p/(p+2)}}\biggl(\left(\frac{x^2}{\alpha_n^{2p/(p+2)}}+\alpha_{n+1}^2\right)^{p/(p+2)} -\alpha_{n+1}^{2p/(p+2)}\biggr)\biggr) \nonumber \\ && \ge\alpha_n^{2p/(p+2)} \Bigg(-\frac{\mathrm{d}^2}{\mathrm{d}x^2}+\left(1-\frac{\gamma_p}{\alpha_n^{4p(p+1)/(p+2)^2}}\right)|x|^p \nonumber \\ && \quad -\frac{\gamma_p}{\alpha_n^{4p(p+1)/(p+2)^2}}
-\frac{\gamma_p \alpha_{n+1}^{2p/(p+2)}}{\alpha_n^{2p/(p+2)}}\Bigg) \nonumber \\ && \ge
\alpha_n^{2p/(p+2)} \Bigg(-\frac{\mathrm{d}^2}{\mathrm{d}x^2}
+\Bigg(1-\frac{\gamma_p}{\alpha_n^{4p(p+1)/(p+2)^2}}\Bigg)|x|^p-\gamma_p \nonumber \\ && \quad -\gamma_p\Bigg(\frac{\alpha_{n+1}^{2p/(p+2)}
-\alpha_n^{2p/(p+2)}}{\alpha_n^{2p/(p+2)}}\Bigg)\Bigg) +o(1) \nonumber \\ && \ge\alpha_n^{2p/(p+2)}\left(-\frac{\mathrm{d}^2}{\mathrm{d}x^2}+\left(1-\frac{\gamma_p}{\alpha_n^{4p(p+1)/(p+2)^2}}\right)|x|^p-\gamma_p\right)+o(1)\nonumber \\ && \label{As.}\ge\alpha_n^{2p/(p+2)}\left(1-\frac{\gamma_p}{\alpha_n^{4p(p+1)/(p+2)^2}}\right) \left(-\frac{\mathrm{d}^2}{\mathrm{d}x^2}+|x|^p-\gamma_p\right)+o(1)\,,
\end{eqnarray}
 % ------------- %
where  the corresponding Neumann (an)harmonic oscillator is restricted to the interval
 % ------------- %
$$
(-\alpha_{n+1}\,\alpha_n^{p/(p+2)}, \alpha_{n+1}\,\alpha_n^{p/(p+2)})\,.
$$
 % ------------- %
Next we need to establish the following lemma.
 % ------------- %
\begin{lemma} \label{lemma ground}
Let  $l_{k, p}=-\frac{\mathrm{d}^2}{\mathrm{d}x^2}+|x|^p$ be the Neumann operator defined on the interval  $[-k, k],\;k>0$. Then
 % ------------- %
\begin{equation}\label{hnp}
\inf\,\sigma\left(l_{k, p}\right)\ge\gamma_p+o\left(\frac{1}{k^{p/2}}\right) \;\quad\text{as}\quad k\to\infty\,.
\end{equation}
 % ------------- %
\end{lemma}
 % ------------- %
\begin{proof}
The relation (\ref{hnp}) is certainly valid if $\inf \sigma\left(l_{k, p}\right)\ge\gamma_p$ holds for all $k$ from some number on. Assume thus that
we have $\inf \sigma\left(l_{k, p}\right)<\gamma_p$ for infinitely many numbers $k$. Let $\psi_{k, p}$ be the normalized ground-state eigenfunction of $l_{k, p}$. We fix a positive $\delta$ and check that
 % ------------- %
\begin{eqnarray}
\int_{-k}^{-k+1}\left(|\psi_{k, p}'|^2+|x|^p|\psi_{k, p}|^2\right)\,\mathrm{d}x<\delta\,, \nonumber
\\[-.7em]\label{eqn}\\[-.7em]
\int_{k-1}^k\left(|\psi_{k, p}'|^2+|x|^p|\psi_{k, p}|^2\right)\,\mathrm{d}x<\delta\,. \nonumber
\end{eqnarray}
 % ------------- %
Indeed, suppose that at least one of inequalities (\ref{eqn}) does not hold, then
 % ------------- %
\begin{equation}\label{middle}
\int_{-k+1}^{k-1}\left(|\psi_{k, p}'|^2+|x|^p|\psi_{k, p}|^2\right)\,\mathrm{d}x<\gamma_p-\delta\,.
\end{equation}
 % ------------- %
Since $\psi_{k, p}$ is by assumption the ground-state eigenfunction of $l_{k, p}$, we have
 % ------------- %
$$
\inf \sigma (l_{k, p})=\int_{-k}^k\left(|\psi_{k, p}'|^2+|x|^p|\psi_{k, p}|^2\right)\,\mathrm{d}x \le\int_{-1}^1\left(|\phi'|^2+|x|^p|\phi|^2\right)\,\mathrm{d}x
$$
 % ------------- %
for all $k\ge 1$ and any normalized function $\phi$ from the domain of the operator, in particular, for any $\phi$ from the class $C_0^\infty(-1,1)$ such that $\int_{-1}^1|\phi|^2\,\mathrm{d}x=1$. Consequently,  for large enough $k$ there must exist points $x_{k, p}^{(1)}\in(-k+1, -k+2)$ and $x_{k, p}^{(2)}\in(k-2, k-1)$ such that
 % ------------- %
$$
\psi_{k, p}\left(x_{k, p}^{(1)}\right)=\mathcal{O}\Big(\frac{1}{k^{p/2}}\Big)\quad\text{and}\quad  \psi_{k, p}\left(x_{k, p}^{(2)}\right)=\mathcal{O}\Big(\frac{1}{k^{p/2}}\Big)\;\quad\text{as}\quad k\to\infty\,.
$$
 % ------------- %
Next we construct a function $\varphi_{k, p}$ on semi-infinite intervals $(-\infty, x_{k, p}^{(1)})$ and $(x_{k, p}^{(2)},
\infty)$ in such a way that
 % ------------- %
$$
g_{k, p}(x):=\psi_{k, p}(x)\chi_{(x_{k, p}^{(1)}, x_{k, p}^{(2)})}(x)+\varphi_{k, p}(x)\chi_{(-\infty, x_{k, p}^{(1)})\cup(x_{k, p}^{(2)}, \infty)}(x)\in\mathcal{H}^{1}(\mathbb{R})
$$
 % ------------- %
and
 % ------------- %
\begin{equation}\label{condition}
\int_{-\infty}^{x_{k, p}^{(1)}}\left(|\varphi_{k, p}'|^2+|x|^p|\varphi_{k, p}|^2\right)\,\mathrm{d}x+\int_{x_{k, p}^{(2)}}^\infty\left(|\varphi_{k, p}'|^2+|x|^p|\varphi_{k, p}|^2\right)\,\mathrm{d}x =\mathcal{O}\Big(\frac{1}{k^{p/2}}\Big)\,;
\end{equation}
 % ------------- %
this can be always achieved, one can take, e.g., the function decreasing linearly with respect to $|x-x_{k, p}^{(j)}|$ from the values $\psi_{k, p}\left(x_{k, p}^{(j)}\right),\; j=1,2,$ to zero. By virtue of (\ref{middle}) and (\ref{condition}) we then have
 % ------------- %
$$
\int_{\mathbb{R}}\left(|g_{k, p}|^2+|x|^p|g_{k, p}|^2\right)\,\mathrm{d}x<\gamma_p-\delta+\mathcal{O}\left(\frac{1}{k^{p/2}}\right)<\gamma_p
$$
 % ------------- %
for large enough $k$, however, this is in contradiction with the fact that $\gamma_p$ is the ground-state eigenvalue of $l_{k,p}$. This  proves the validity of (\ref{eqn}).

Having established the validity of inequalities (\ref{eqn}) we infer from them that there are points $y^{(1)}_{k, p}\in(-k, -k+1)$ and $y_{k, p}^{(2)}\in(k-1, k)$ such that
 % ------------- %
$$
\psi_{k, p}(y^{(j)}_{k, p})=\mathcal{O}\Big(\frac{\delta}{k^{p/2}}\Big)\,, \quad j=1,2\,.
$$
 % ------------- %
Now we repeat the argument and construct a function $\tilde{\varphi}_{k, p}$ on the semi-infinite intervals $(-\infty, y^{(1)}_{k, p})$ and $(y^{(2)}_{k, p}, \infty)$ in such a way that
 % ------------- %
$$\tilde{g}_{k, p}(x):=\psi_{k, p}(x)\chi_{(y_{k, p}^{(1)}, y_{k, p}^{(2)})}(x)+\tilde{\varphi}(x)\chi_{(-\infty, y_{k, p}^{(1)})\cup(y_{k, p}^{(2)}, \infty)}(x)\in\mathcal{H}^{1}(\mathbb{R})
$$
 % ------------- %
and
 % ------------- %
$$
\int_{-\infty}^{y_{k, p}^{(1)}}\left(|\tilde{\varphi}_{k, p}'|^2+|x|^p|\tilde{\varphi}_{k, p}|^2\right)\,\mathrm{d}x+\int_{y_{k, p}^{(2)}}^\infty\left(|\tilde{\varphi}_{k, p}'|^2+|x|^p|\tilde{\varphi}_{k, p}|^2\right)\,\mathrm{d}x=\mathcal{O}\Big(\frac{\delta}{k^{p/2}}\Big)\,.
$$
 % ------------- %
Using the last relation one finds that
 % ------------- %
$$
\int_{\mathbb{R}}|\tilde{g}_{k, p}'|^2\,\mathrm{d}x+\int_{\mathbb{R}}|x|^p|\tilde{g}_{k, p}|^2\,\mathrm{d}x<\inf\,\sigma\left(l_{k, p}\right)+ \mathcal{O}\Big(\frac{\delta}{k^{p/2}}\Big)\,.
$$
 % ------------- %
However, $\gamma_p$ is the ground-state eigenvalue,
 % ------------- %
$$
\int_{\mathbb{R}}|\tilde{g}_{k, p}'|^2\,\mathrm{d}x+\int_{\mathbb{R}}|x|^p|\tilde{g}_{k, p}|^2\,\mathrm{d}x\ge\gamma_p\,,
$$
 % ------------- %
which in combination with above inequality gives
 % ------------- %
$$
\inf\,\sigma\left(l_{k, p}\right)>\gamma_p- \mathcal{O}\Big(\frac{\delta}{k^{p/2}}\Big)\,,
$$
 % ------------- %
proving the claim of the lemma.
\end{proof}

It follows from Lemma~\ref{lemma ground} that the right-hand side of the estimate (\ref{As.}) behaves asymptotically as
 % ------------- %
$$
o\left(\frac{1}{\alpha_n^{p(p+1)/(p+2)}}\right) \alpha_n^{2p/(p+2)}+o(1)
$$
 % ------------- %
which can be made arbitrarily small by choosing $n$ is large enough; this is what we needed to conclude the proof of Theorem~\ref{thm:subcrit}.
\end{proof}

\begin{remark}
{\rm We know from \cite[Thm.~2.1]{EB12} that the critical operator $L_p(\gamma_p)$ is bounded from below. Estimating separately the contributions to the respective quadratic form coming from the regions $\{(x,y):\: |y|\ge 1\}$, $\{(x,y):\: |x|\ge 1\,, |y|\le 1\}$, and the central square $(-1,1)^2$, we can derive a lower bound to the threshold of the negative spectrum in terms of spectral properties of the one-dimensional operators with the symbol
 % ------------- %
$$
-\frac{\mathrm{d}^2}{\mathrm{d}t^2}+|t|^p-\gamma_p\left(\frac{t^2}{z^{(4p+4)/(p+2)}}+1\right)^{p/(p+2)}
$$
 % ------------- %
with $z\ge1$. As such a bound is not simple and does not provide any significant insight, however, we are not going to present it here.}
\end{remark}

\subsection{Existence of the negative spectrum: a numerical indication}

Theorem~\ref{thm:subcrit} tells us that the spectrum in the negative halfline can be discrete only, and as we have remarked above one can find a lower estimate to its threshold, however, neither of these results implies anything about the negative spectrum \emph{existence}. Now we are going address this question numerically and provide an evidence of the discrete spectrum nontriviality.

We considering first the operator $L_2(\gamma_2)$ --- recall that $\gamma_2=1$ --- and impose a cutoff at a circle of radius $R$ circled at the origin with Dirichlet and Neumann boundary condition, and find the corresponding first and second eigenvalue using the Finite Element Method. The result is shown on Fig.~\ref{cutoff}.
 % ------------- %
\begin{figure}[h]\large
\begin{center}
\scalebox{0.8}{ \includegraphics{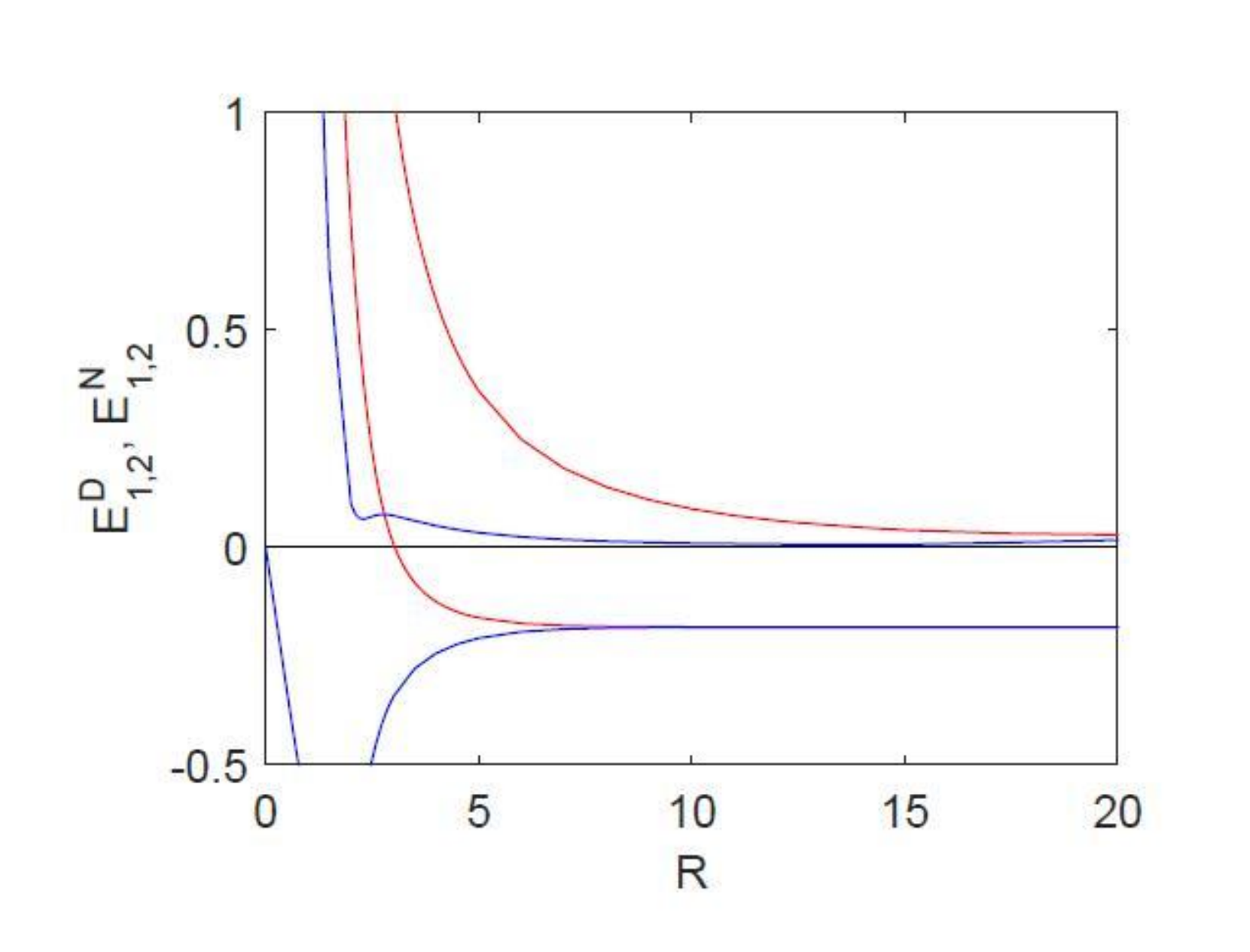}}
\caption{The eigenvalues $E_j,\: j=1,2$, of the critical operator with p=2 as functions of the cutoff radius $R$. The blue and red curves correspond to the Neumann and Dirichlet boundary, respectively.}\label{cutoff}
\end{center}
\end{figure}
 % ------------- %
We see, in particular, that the lowest Dirichlet eigenvalue is for $R\gtrsim 7$ practically independent of the cutoff radius and negative which by an elementary bracketing argument indicates that $L_2(1)$ has a negative eigenvalue. Furthermore, the difference between the Dirichlet and Neumann eigenvalue becomes negligible for large enough $R$ which shows that true ground-state eigenvalue in this case is $E\approx -0.18365$. For the second eigenvalue the DN gap also squeezes, although much slower and the Neumann eigenvalue is positive which hints that the discrete spectrum consists of a single point.
 % ------------- %
\begin{figure}[h]\large
\begin{center}
\scalebox{0.7}{\includegraphics{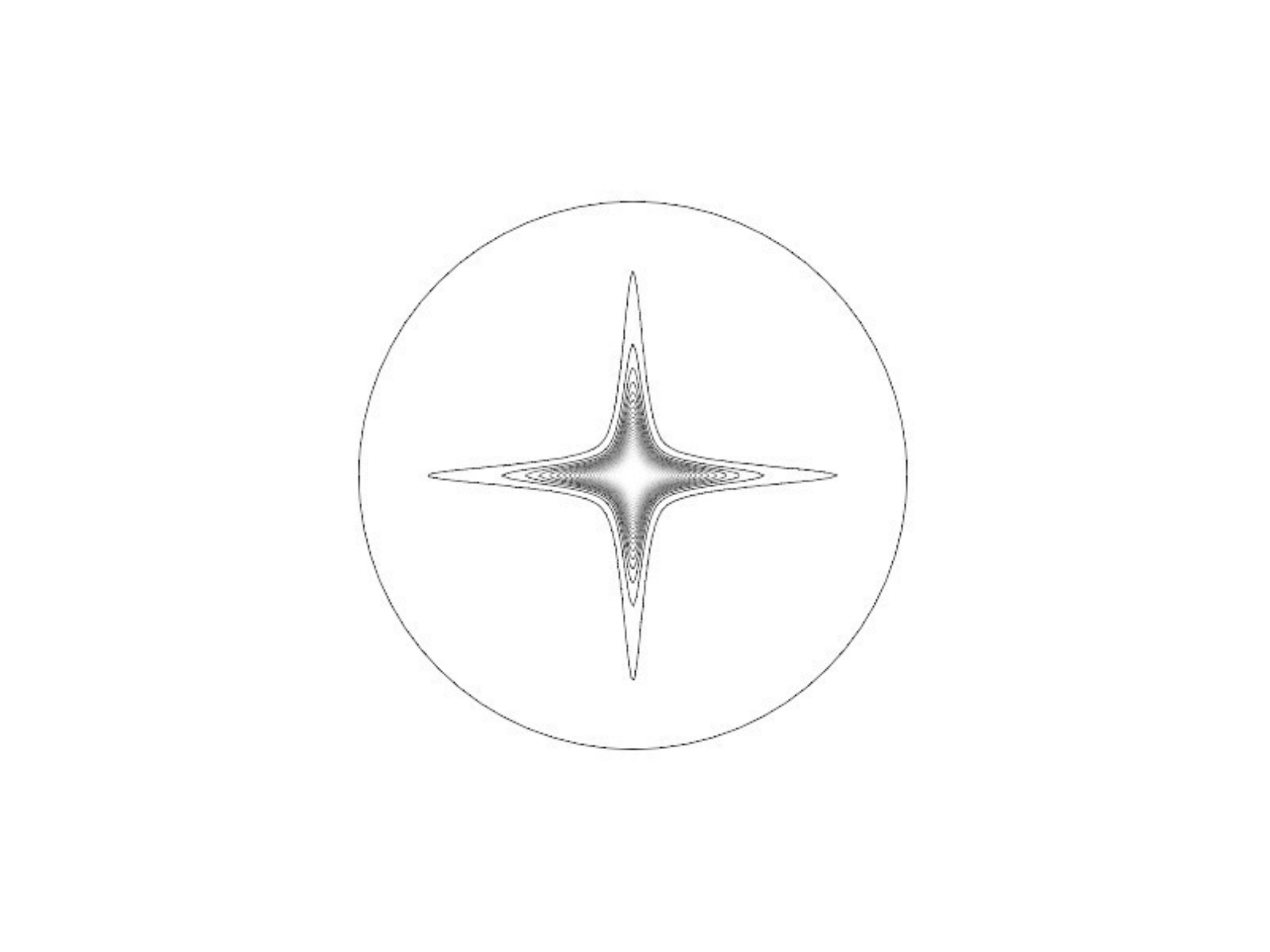}}
\caption{The ground-state eigenfunction for $p=2$, view from the top.}\label{groundstate}
\end{center}
\end{figure}
 % ------------- %
The Finite Element Method allows us also to compute the ground-state eigenfunction as shown on Fig.~\ref{groundstate}. The result is practically independent of the boundary condition used which is understandable since the function has an exponential falloff and the influence of the boundary is negligible for large enough $R$.

By continuity, the ground-state eigenvalue of $L_p(\lambda)$ exists in the vicinity of the point $p=2$; one is naturally interested what one can say about a broader range of the parameter. 
 % ------------- %
\begin{figure}[h]\large
\begin{center}
\scalebox{0.55}{\includegraphics{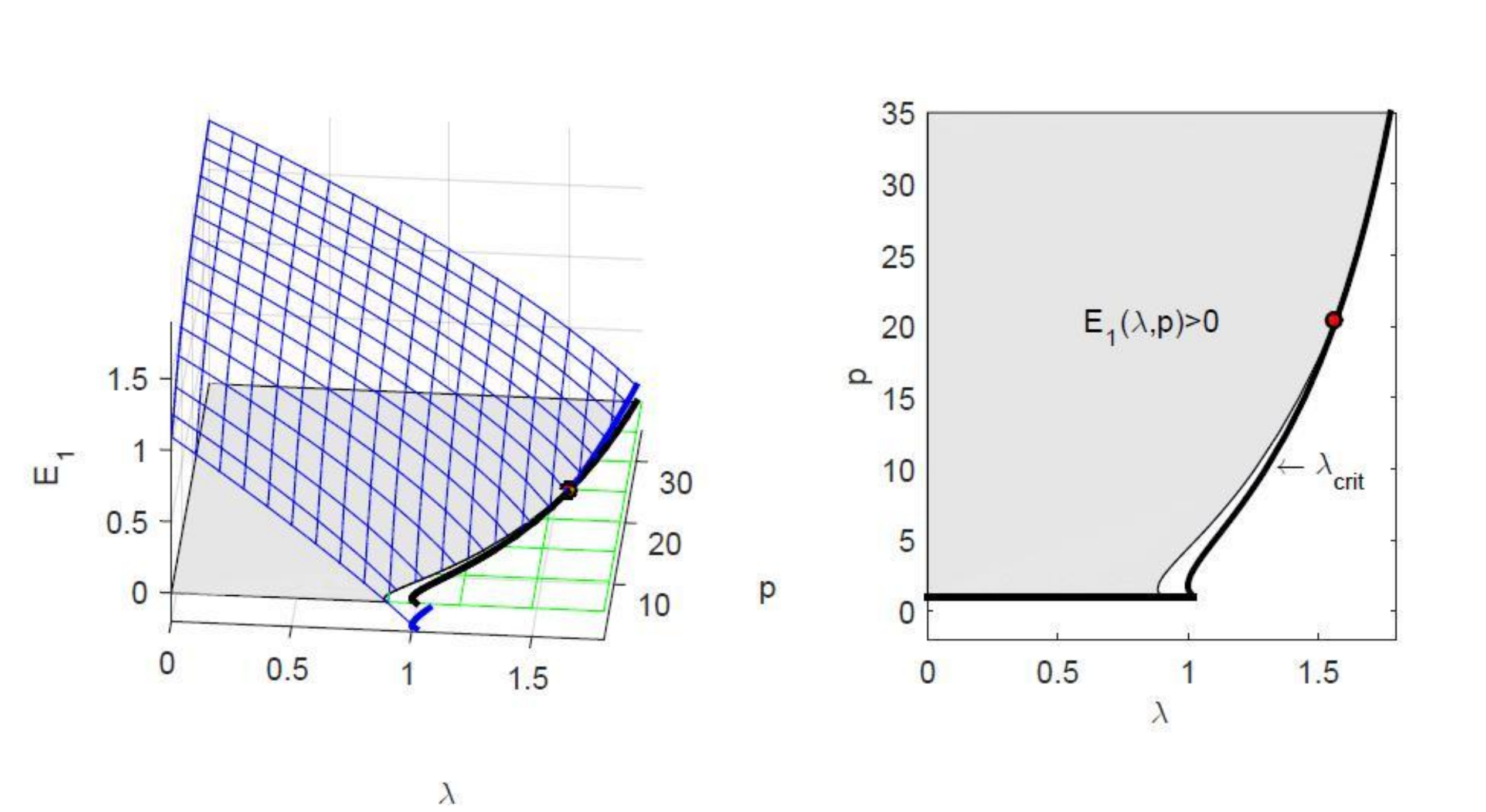}}
\caption{Positivity of $L_p(\lambda)$ as a function of $\lambda$ and $p$.}\label{critical3d}
\end{center}
\end{figure}
 % ------------- %
To this aim we plot in the left part of Fig.~\ref{critical3d} the lowest eigenvalue of the cut-off operator as the function of $p$ and the coupling constant. The right part shows the zero-energy cut of the surface in which the shaded region indicates the part of the $(\lambda,p)$ plane where the lowest eigenvalue of the cut-off operator is positive, as compared to $\lambda_\mathrm{crit}=\gamma_p$. The two curves meet at $p\approx  20.392$ corresponding to $\lambda_\mathrm{crit}\approx 1.563$. Up to this value, it is thus reasonable to expect that a negative eigenvalue exists. For higher values of $p$ the numerical accuracy is a demanding problem, we nevertheless conjecture that at least the Dirichlet region operator, $p=\infty$, is positive. Fig.~\ref{critical3d} also provides an idea of how the spectral threshold of $L_p(\lambda)$ depends on the coupling constant.

%%%%%%%%%%%%%%%%%%%%%%%%%%%%%%%%%%%%%%%%%%%%%%%%%%%%%%%%%%%%%
\section{Subcritical case, eigenvalue estimates}\label{s: LT}
\setcounter{equation}{0}

Let us finally pass to the subcritical case, $\lambda<\gamma_p$. According to \cite[Thm.~2.1]{EB12} the operator $L_p(\lambda)$ has in this case a purely discrete spectrum. In the mentioned paper a crude bound on eigenvalue sums was established for small values of the coupling constant $\lambda$. We are going derive now a substantially stronger result, an estimate on eigenvalue moments valid for any $\lambda<\gamma_p$. More specifically, let $\mu_1<\mu_2\le\mu_3\le\cdots$ be the set of ordered eigenvalues of (\ref{operator}); we are looking for bounds of the quantities $\sum_{j=1}^\infty (\Lambda-\mu_j)_+^\sigma$ for fixed numbers $\Lambda$ and $\sigma$. This is the contents of the following theorem.
 % ------------- %
 \begin{theorem} \label{thm:subcrit-est}
Let $\lambda<\gamma_p$, then for any $\Lambda\ge0$ and $\sigma\ge3/2$ the following trace inequality holds,
 % ------------- %
\begin{eqnarray}\label{LTbound}
\lefteqn{\hspace{-4em} \mathrm{tr} \left(\Lambda-L_p(\lambda)\right)_+^\sigma} \\&& \nonumber \hspace{-3em} \le C_{p,\sigma}\frac{\left(\Lambda+1\right)^{\sigma+(p+1)/p}}{(\gamma_p-\lambda)^{\sigma+(p+1)/p}}
\left(\left|\ln\left(\frac{\Lambda+1}{\gamma_p-\lambda}\right)\right|+1\right)+C_{p, \sigma}\, C_\lambda^2\left(\Lambda+C_\lambda^{2p/(p+2)}\right)^{\sigma+1}\!,
\end{eqnarray}
 % ------------- %
where the constant $C_{p, \sigma}$ depends on $p$ and $\sigma$ only and
 % ------------- %
$$
C_\lambda=\max\left\{\frac{1}{(\gamma_p-\lambda)^{(p+2)/(p(p+1))}},\,\frac{1}{(\gamma_p-\lambda)^{(p+2)^2/(4p(p+1))}}\right\}\,.
$$
 % ------------- %
\end{theorem}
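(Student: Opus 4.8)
The plan is to bound $L_p(\lambda)$ from below by a direct sum of simpler operators and to exploit the elementary monotonicity of $\Lambda\mapsto\mathrm{tr}(\Lambda-A)_+^\sigma$ with respect to the operator order: if $A\ge B$ then $\mathrm{tr}(\Lambda-A)_+^\sigma\le\mathrm{tr}(\Lambda-B)_+^\sigma$. Hence every Neumann bracketing lowers the operator and raises the trace, and it suffices to estimate the trace for the bracketed comparison operator. Concretely, I would fix a $\lambda$-dependent cutoff length $\rho\sim C_\lambda$ and impose additional Neumann conditions splitting $\R^2$ into an inner box $(-\rho,\rho)^2$, four outer channels along the half-axes (e.g. $\{|x|<\rho,\ y>\rho\}$ and its images), and four outer corner regions $\{|x|>\rho,\ |y|>\rho\}$. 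The reflection and $x\leftrightarrow y$ symmetries of the potential reduce the channel and corner estimates to a single representative of each type.

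On the inner box and on the four corner regions I would use the standard two-dimensional Lieb--Thirring inequality $\mathrm{tr}(\Lambda-A)_+^\sigma\le L_\sigma\int_\Omega(\Lambda-V)_+^{\sigma+1}\,\D x\,\D y$, which is available here because $\sigma\ge 3/2>0$ in dimension two. In a corner one has $|xy|^p\ge \rho^p\max(|x|,|y|)^p$, which dominates both $\Lambda$ and the negative term $\lambda(x^2+y^2)^{p/(p+2)}$ outside a bounded set, so the integrand is compactly supported and the corner contribution is finite. On the inner box the integrand is bounded by $\big(\Lambda+\lambda\rho^{2p/(p+2)}\big)^{\sigma+1}$ and the area by $\rho^2\sim C_\lambda^2$, which produces exactly the second term $C_{p,\sigma}\,C_\lambda^2\big(\Lambda+C_\lambda^{2p/(p+2)}\big)^{\sigma+1}$ of \eqref{LTbound}. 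The role of the cutoff $\rho\sim C_\lambda$ is to be large enough that in the channels the transverse confinement is genuinely effective; its precise size is dictated by the threshold conditions below, which is where the two competing exponents in $C_\lambda$ originate.

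The heart of the argument is the channel. After the bracketing, on a cell $Q_n=(-\rho,\rho)\times(\beta_n,\beta_{n+1})$ of the $y$-channel the comparison operator is $-\partial_x^2-\partial_y^2+|xy|^p-\lambda(x^2+y^2)^{p/(p+2)}$ with Neumann conditions. For fixed $y$ the transverse operator obeys, after the scaling $x\mapsto x\,y^{p/(p+2)}$ and an application of Lemma~\ref{lemma ground}, the lower bound $-\partial_x^2+|xy|^p\ge\gamma_p\,y^{2p/(p+2)}\big(1+o(1)\big)$ with a relative error tending to $0$ as $\rho\to\infty$, exactly as in the chain \eqref{As.}; bounding also $(x^2+y^2)^{p/(p+2)}$ from above on the cell reduces $Q_n$ to the one-dimensional Neumann operator $-\partial_y^2+c_n$ on $(\beta_n,\beta_{n+1})$ with $c_n=(\gamma_p-\lambda)\beta_n^{2p/(p+2)}-(\text{errors})$. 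Summing the finite-interval Neumann bound $\mathrm{tr}(\Lambda-(-\partial_y^2+c_n))_+^\sigma\le C_\sigma(\beta_{n+1}-\beta_n)(\Lambda-c_n)_+^{\sigma+1/2}+(\Lambda-c_n)_+^\sigma$ over the cells then yields the first term of \eqref{LTbound}: the phase-space part $\sum_n(\beta_{n+1}-\beta_n)(\Lambda-c_n)_+^{\sigma+1/2}$ reproduces the expected power $\big((\Lambda+1)/(\gamma_p-\lambda)\big)^{\sigma+(p+1)/p}$, while the one-per-cell Neumann boundary terms $\sum_n(\Lambda-c_n)_+^\sigma$, together with the need to space the cells (geometrically) so that the transverse errors stay uniformly below the surviving gap $(\gamma_p-\lambda)y^{2p/(p+2)}$, account for the logarithmic factor $|\ln((\Lambda+1)/(\gamma_p-\lambda))|+1$. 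I do not expect the resulting exponents to be optimal; the crude cell-wise bounds trade sharpness in the $(\gamma_p-\lambda)\to0$ behaviour for a uniform estimate valid on the whole subcritical range.

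The main obstacle is precisely the coupling between the cutoff $\rho$, the cell spacing, and the transverse error. One must choose $\rho$ and $\{\beta_n\}$ so that, simultaneously, (i) Lemma~\ref{lemma ground} delivers the transverse bottom $\gamma_p\,y^{2p/(p+2)}$ with a relative error smaller than $(\gamma_p-\lambda)/\gamma_p$ for every $y>\rho$, which forces $\rho\gtrsim(\gamma_p-\lambda)^{-(p+2)^2/(4p(p+1))}$; and (ii) the increments $\beta_{n+1}^{2p/(p+2)}-\beta_n^{2p/(p+2)}$ stay comparable to $(\gamma_p-\lambda)\beta_n^{2p/(p+2)}$ so that $c_n$ is monotone and the error and boundary sums converge, which forces $\rho\gtrsim(\gamma_p-\lambda)^{-(p+2)/(p(p+1))}$. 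The maximum of these two thresholds is exactly $C_\lambda$, and verifying that with this choice all the discarded terms --- the transverse remainders, the $o(1)$ corrections in \eqref{As.}, and the curvature of $(x^2+y^2)^{p/(p+2)}$ across a cell --- are dominated by the two displayed terms is the only genuinely technical part; everything else is the bookkeeping of the two Lieb--Thirring estimates described above.
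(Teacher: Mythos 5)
Your skeleton --- Neumann bracketing into cells, Lemma~\ref{lemma ground} for the transverse ground level, two threshold conditions whose competition produces the exponents in $C_\lambda$, and an inner box yielding the second term of \eqref{LTbound} --- is essentially the paper's strategy, and your identification of where the two exponents of $C_\lambda$ come from is correct. However, the central step of your channel estimate is invalid. On the two-dimensional cell $(-\rho,\rho)\times(\beta_n,\beta_{n+1})$ you replace the transverse part $-\partial_x^2+|xy|^p-\lambda(\cdots)$ by its lowest eigenvalue $c_n$ and then apply the one-dimensional Neumann bound $\mathrm{tr}\,\bigl(\Lambda-(-\partial_y^2+c_n)\bigr)_+^\sigma\le C_\sigma(\beta_{n+1}-\beta_n)(\Lambda-c_n)_+^{\sigma+1/2}+(\Lambda-c_n)_+^\sigma$. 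This conflates two readings, both of which fail: as an operator inequality on $L^2$ of the cell, the comparison operator $(-\partial_y^2+c_n)\otimes I$ has every eigenvalue with infinite multiplicity, so its trace moment is $+\infty$ and yields nothing; read instead as a genuinely one-dimensional reduction, it silently discards all transverse excitations of the cell. Those are not negligible: whenever $\beta_n^{2p/(p+2)}(\gamma_p^{(1)}-\lambda)<\Lambda$, with $\gamma_p^{(1)}$ the second eigenvalue of $H_p$ --- which happens in essentially all the cells producing the main term --- the cell carries many transverse levels below $\Lambda$ (of order $\Lambda^{(p+2)/(2p)}\beta_n^{-1}$ when the classical turning point lies inside the cell), each spawning its own string of longitudinal eigenvalues. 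The paper performs the opposite reduction: condition \eqref{decay*} makes each cell so short in the longitudinal direction that only the constant longitudinal Neumann mode can contribute, and then the \emph{full} transverse operator is estimated by a one-dimensional Lieb--Thirring inequality, cf.\ \eqref{L.T.h_n}, whose phase-space integral over $x$ counts all transverse modes. To repair your argument you must either adopt that reduction or explicitly sum your one-dimensional bound over all transverse eigenvalues of every cell.

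A second gap concerns the corner regions $\{|x|>\rho,\,|y|>\rho\}$, which the paper's decomposition avoids altogether (its rectangles $G_n$ span the full width $|x|<\alpha_{n+1}$, so the scaled anharmonic-oscillator bound applies there too). For the theorem it is not enough that the corner contribution be \emph{finite}; it must be dominated by the right-hand side of \eqref{LTbound} uniformly in $\Lambda$. Your lower bound $|xy|^p\ge\rho^p\max(|x|,|y|)^p$ confines the Lieb--Thirring integrand to the square $\max(|x|,|y|)\le(2\Lambda)^{1/p}\rho^{-1}$, of area of order $\Lambda^{2/p}\rho^{-2}$, so it produces a term of order $\Lambda^{\sigma+1+2/p}\rho^{-2}$; as $\Lambda\to\infty$ this outgrows both terms of \eqref{LTbound}, whose largest power of $\Lambda$ is $\sigma+1+1/p$. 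Keeping the hyperbolic structure instead --- the support $\{xy\le(2\Lambda)^{1/p}\}$ intersected with a corner has area $O\bigl(\Lambda^{1/p}\ln\Lambda\bigr)$ --- would give the admissible power $\Lambda^{\sigma+1+1/p}(\ln\Lambda+1)$, so this part is fixable; but as written the estimate does not close.
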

 % ------------- %
\begin{proof}[Proof]
By the minimax principle it is sufficient to estimate $L_p$ from below by a self-adjoint operator with a purely discrete spectrum for which the moments in question can be calculated. To construct such a lower bound we again employ a bracketing, imposing additional Neumann conditions at the rectagles $G_n,\: \widetilde{G}_n$, and $Q_n,\:\widetilde{Q}_n$ introduced in the proof of Theorem~\ref{thm:subcrit}. The sequence $\{\alpha_n\}_{n=1}^\infty$ is monotonically increasing by construction; we assume again that $\alpha_n\to\infty$ and that the rectangles get asymptotically thinner according to (\ref{Decay}), i.e.
 % ------------- %
$$
\alpha_{n+1}^{2p/(p+2)}-\alpha_n^{2p/(p+2)}\to 0\quad\text{as}\quad n\to\infty\,.
$$
 % ------------- %
Then, as before, we obtain a direct sum of operators $L^{(1)}_{n, p}, \,\widetilde{L}^{(1)}_{n, p},\,\,L^{(2)}_{n, p}, \,\,\widetilde{L}^{(2)}_{n, p}$ and $L_p^0$. We are going to find the eigenvalue momentum estimates for those.

Let us start from  $L_{n, p}^{(1)},\,n=1, 2,\ldots$. We again find a lower bound using the operator $H^{(1)}_n$  given by (\ref{Hnp1}), the spectrum of which is the sum of two one-dimensional operators. Since the spectrum of one-dimensional Neumann operator $-\frac{\mathrm{d}^2}{\mathrm{d}y^2}$ on the interval $(\alpha_n, \alpha_{n+1})$ is discrete and simple with the eigenvalues $\left\{\frac{\pi^2 k^2}{(\alpha_{n+1} -\alpha_n)^2}\right\}_{k=0}^\infty$, the problem reduces to analysis of the operator $h^{(1)}_{n, p}=-\frac{\mathrm{d}^2}{\mathrm{d}x^2}+\alpha_n^p |x|^p-\lambda (x^2+\alpha_{n+1}^2)^{p/(p+2)}$ on $L^2(-\alpha_{n+1}, \alpha_{n+1})$ which is unitarily equivalent to (\ref{un.eq.}).

To proceed we put $\kappa:=\frac{\gamma_p-\lambda}{2(\gamma_p+\lambda+2)}$ and assume that the edge coordinates satisfy
 % -------------%
\begin{eqnarray}\label{decay1}
\alpha_1\ge\left(\frac{\lambda}{\kappa}\right)^{(p+2)^2/(4p(p+1))}\,,\\
\label{decay2}\alpha_{n+1}^{2p/(p+2)}-\alpha_n^{2p/(p+2)}<\frac{\kappa}{\lambda}\,.
\end{eqnarray}
 % ------------- %
Using then the fact that $(a+b)^q\le a^q+b^q$ holds for any positive numbers $a,\,b$ and $q<1$, in combination with (\ref{decay1}), we arrive at the inequalities
 % ------------- %
\begin{eqnarray*}
\lefteqn{\frac{\lambda}{\alpha_n^{2p/(p+2)}}\left(\left(\frac{x^2}{\alpha_n^{2p/(p+2)}}+\alpha_{n+1}^2\right)^{p/(p+2)}-\alpha_{n+1}^{2p/(p+2)}\right)} \\ && \le\frac{\lambda}{\alpha_n^{2p/(p+2)}}\left(\frac{x^2}{\alpha_n^{2p/(p+2)}}\right)^{p/(p+2)} \le\kappa (|x|^p+1)\,.
\end{eqnarray*}
 % ------------- %
Next, by virtue of (\ref{decay2}) and above estimate, we have
 % ------------- %
\begin{eqnarray}
\lefteqn{\nonumber h_{n,p}^{(2)}(\lambda)=\alpha_n^{2p/(p+2)}\,\biggl(-\frac{\mathrm{d}^2}{\mathrm{d}x^2}+|x|^p
-\frac{\lambda\,\alpha_{n+1}^{2p/(p+2)}}{\alpha_n^{2p/(p+2)}}} \\ && \nonumber \quad -\frac{\lambda}{\alpha_n^{2p/(p+2)}}\biggl(\left(\frac{x^2}{\alpha_n^{2p/(p+2)}}+\alpha_{n+1}^2\right)^{p/(p+2)}
-\alpha_{n+1}^{2p/(p+2)}\biggr)\biggr)\\ && \nonumber \ge\alpha_n^{2p/(p+2)} \left(-\frac{\mathrm{d}^2}{\mathrm{d}x^2}+(1-\kappa)|x|^p-\kappa-\frac{\lambda \alpha_{n+1}^{2p/(p+2)}}{\alpha_n^{2p/(p+2)}}\right)
\\ && \nonumber =\alpha_n^{2p/(p+2)} \left(-\frac{\mathrm{d}^2}{\mathrm{d}x^2}+(1-\kappa)|x|^p-\kappa-\frac{\lambda \left(\alpha_{n+1}^{2p/(p+2)}-\alpha_n^{2p/(p+2)}\right)}{\alpha_n^{2p/(p+2)}}-\lambda\right)
\\ && \label{as.}\ge (1-\kappa) \alpha_n^{2p/(p+2)}\,\left(-\frac{\mathrm{d}^2}{\mathrm{d}x^2}+|x|^p-\lambda^\prime-\kappa^\prime\right)-\kappa\,,
\end{eqnarray}
 % ------------- %
where $\kappa^\prime:=\frac{\kappa}{1-\kappa},\,\lambda^\prime:=\frac{\lambda}{1-\kappa}$, and the corresponding Neumann (an)harmonic oscillator is defined on the interval
 % ------------- %
\begin{equation}\label{interval}
(-\alpha_{n+1}\,\alpha_n^{p/(p+2)}, \alpha_{n+1}\,\alpha_n^{p/(p+2)})\,.
\end{equation}
 % ------------- %
It follows from Lemma~\ref{lemma ground} that if the interval (\ref{interval}) is large enough, which can be achieved by choosing
 % ------------- %
\begin{equation}\label{new assump.}
\alpha_2 \alpha_1^{p/(p+2)}>\alpha_1^{2(p+1)/(p+2)}>K_{0, p}
\end{equation}
 % ------------- %
with a large enough $K_{0, p}$, we have the estimate
 % ------------- %
\begin{equation}\label{inf}
h_{n, p}^{(2)}\ge(1-\kappa)\alpha_n^{2p/(p+2)}\left(\gamma_p-\frac{1}{\alpha_{n+1}^{p/2} \alpha_n^{p^2/(2(p+2))}}-\lambda^\prime-\kappa^\prime\right)-\kappa\,.
\end{equation}
 % ------------- %
Our aim is now to show that by choosing a suitable sequence $\{\alpha_n\}_{n=1}^\infty$ we can achieve  that for any $n\ge1$ the following estimate holds,
 % ------------- %
\begin{equation}\label{inf1}
\inf \sigma\left(h_{n, p}^{(2)}\right)\ge(1-\kappa) \alpha_n^{2p/(p+2)}\frac{(\gamma_p-\lambda)}{2}-\kappa\,.
\end{equation}
 % ------------- %
This is ensured, for instance, if
 % ------------- %
\begin{equation}\label{alpha_1guarant.}
\alpha_1\ge \left(\frac{2(1-\kappa)}{\gamma_p-\lambda-\kappa (\gamma_p+\lambda+2)}\right)^{(p+2)/(p(p+1))}\,.
\end{equation}
 % ------------- %
Combining (\ref{decay1}), (\ref{new assump.}) and  (\ref{alpha_1guarant.}) we thus choose
 % ------------- %
\begin{eqnarray}\label{alpha2}
\lefteqn{\hspace{-5.5em} \alpha_1=1} \\ && \nonumber \hspace{-5em} +\left[\max\left\{\!K_{0, p}^{(p+2)/(2(p+1))},
\left(\frac{2(1-\kappa)}{\gamma_p-\lambda-\kappa(\gamma_p+\lambda+2)}\right)^{(p+2)/(p(p+1))}\!\!\!,
\left(\frac{\lambda}{\kappa}\right)^{(p+2)^2/(4p(p+1))}\right\}\right]\!,
\end{eqnarray}
 % ------------- %
where $[\cdot]$ means the entire part. Let us now return to the eigenvalue momentum estimates. One has
 % ------------- %
\begin{equation}\label{eigenvalue momentum}
\mathrm{tr} \left(\Lambda-h_{n, p}^{(2)}\right)_+^\sigma
=\inf\,\sigma\left(h_{n, p}^{(2)}-\Lambda\right)_-^\sigma+\mathrm{tr}^{\prime}\left(h_{n, p}^{(2)}-\Lambda\right)_-^\sigma\,,
\end{equation}
 % ------------- %
where $\mathrm{tr}^\prime$ the summation which yields the corresponding eigenvalue moment in which the ground state is not taken into account. Using next inequalities (\ref{as.}), (\ref{inf1}), in combination with version of Lieb-Thiring inequality suitable for our purpose \cite{M15}), we infer from (\ref{eigenvalue momentum}) that for any positive $\Lambda,\,\sigma\ge3/2$ and $n\ge1$ one has
 % ------------- %
\begin{eqnarray}
\lefteqn{\hspace{-5.5em}\nonumber \mathrm{tr} \left(\Lambda-h_{n, p}^{(2)}\right)_+^\sigma\le\left(\Lambda+\kappa\right)^\sigma} \\ && \hspace{-5em}\nonumber +(1-\kappa)^\sigma\,\alpha_n^{2p\sigma/(p+2)}\,L_{\sigma, 1}^{\mathrm{cl}} \int_{-\alpha_{n+1} \alpha_n^{2p/(p+2)}}^{ \alpha_{n+1} \alpha_n^{2p/(p+2)}}\left(\frac{\Lambda+\kappa}{(1-\kappa)\,\alpha_n^{2p/(p+2)}}
-|x|^p+\lambda^\prime+\kappa^\prime\right)_+^{\sigma+1/2}\,\mathrm{d}x \\ && \nonumber \hspace{-5em}
\le\left(\Lambda+\kappa\right)^\sigma+(1-\kappa)^\sigma\,\alpha_n^{2p\sigma/(p+2)}\,L_{\sigma, 1}^{\mathrm{cl}} \int_{\mathbb{R}}\left(\frac{\Lambda+\kappa}{(1-\kappa)\,\alpha_n^{2p/(p+2)}}-|x|^p+\lambda^\prime+\kappa^\prime\right)_+^{\sigma+1/2} \mathrm{d}x
\\ && \hspace{-5em} \label{L.T.h_n} \le\left(\Lambda+\kappa\right)^\sigma+2 \alpha_n^{2p\sigma/(p+2)}\,L_{\sigma, 1}^{\mathrm{cl}}\left(\frac{\Lambda+\kappa}{(1-\kappa) \alpha_n^{2p/(p+2)}}+\lambda^\prime+\kappa^\prime\right)^{\sigma+(p+2)/(2p)}\,.
\end{eqnarray}
 % ------------- %
We further restrict the choice of the sequence $\{\alpha_n\}_{n=1}^\infty$ demanding
 % ------------- %
 \begin{equation}\label{decay*}
 \alpha_{n+1}-\alpha_n<\pi\left(\Lambda-\inf\,\sigma(h_{n, p}^{(2)})(\lambda)\right)_+^{-1/2}\,;
 \end{equation}
 % ------------- %
this allows us to write the following estimate
 % ------------- %
\begin{eqnarray}\label{Ln}
\lefteqn{\mathrm{tr} \left(\Lambda-\bigoplus_{n=1}^\infty L_{n, p}^{(1)}\right)_+^\sigma\le\mathrm{tr} \left(\Lambda-\bigoplus_{n=1}^\infty H_n^{(1)}\right)_+^\sigma} \\ && \nonumber
\le\sum_{n=1}^\infty\sum_{k=0}^\infty\mathrm{tr} \left(\Lambda-\frac{\pi^2 k^2}{(\alpha_{n+1}-\alpha_n)^2}-h_{n, p}^{(2)}\right)_+^\sigma \le\sum_{n=1}^\infty\mathrm{tr} \left(\Lambda-h_{n, p}^{(2)}\right)_+^\sigma\,.
\end{eqnarray}
 % ------------- %
Using next the fact that $\inf \sigma\left(L_{n, p}^{(1)}\right)\ge \inf \sigma\left(h_{n, p}^{(2)}\right)$ in combination with estimates  (\ref{inf1}), (\ref{L.T.h_n}), and (\ref{Ln}) one gets
 % ------------- %
\begin{eqnarray}
\lefteqn{\hspace{-5.5em} \label{tr.est.} \mathrm{tr} \left(\Lambda-\bigoplus_{n=1}^\infty L_{n,p}^{(1)}\right)_+^\sigma \le\sum_{(\gamma_p-\lambda)\alpha_n^{2p/(p+2)}<\frac{2(\Lambda+\kappa)}{1-\kappa}}\left(\Lambda+\kappa\right)^\sigma}
\\ && \hspace{-5em} \nonumber +2 L_{\sigma, 1}^{\mathrm{cl}}\sum_{(\gamma_p-\lambda) \alpha_n^{2p/(p+2)}<\frac{2(\Lambda+\kappa)}{1-\kappa}}\alpha_n^{2p\sigma/(p+2)}\,\biggl(\frac{\Lambda+\kappa}{(1-\kappa) \alpha_n^{2p/(p+2)}}+\lambda^\prime+\kappa^\prime\biggr)^{\sigma+(p+2)/(2p)} \\ && \hspace{-5em} \nonumber
\le\sum_{(\gamma_p-\lambda)\alpha_n^{2p/(p+2)}<\frac{2(\Lambda+\kappa)}{1-\kappa}}\left(\Lambda+\kappa\right)^\sigma
\\ && \hspace{-5em} \nonumber +\frac{2 L_{\sigma, 1}^{\mathrm{cl}}}{(1-\kappa)^{\sigma+(p+2)/(2p)}}\,\sum_{(\gamma_p-\lambda) \alpha_n^{2p/(p+2)}<\frac{2(\Lambda+\kappa)}{1-\kappa}}\alpha_n^{2p\sigma/(p+2)}\,
\left(\frac{\Lambda+\kappa}{\alpha_n^{2p/(p+2)}}+\lambda+\kappa\right)^{\sigma+(p+2)/(2p)} \\ && \hspace{-5em} \nonumber \le\left(\Lambda+\kappa\right)^\sigma \#\left\{\alpha_n<\left(\frac{2(\Lambda+\kappa)}{(1-\kappa)(\gamma_p-\lambda)}\right)^{(p+2)/(2p)}\right\}
\\ && \hspace{-5em} \nonumber +\frac{2L_{\sigma,1}^{\mathrm{cl}}\,(\lambda+1+\kappa)^{\sigma+(p+2)/(2p)}}{(1-\kappa)^{\sigma+(p+2)/(2p)}}\,
(\Lambda+\kappa)^{\sigma+(p+2)/(2p)}\,\sum_{\alpha_n<(\Lambda+\kappa)^{(p+2)/(2p)}}\frac{1}{\alpha_n}
\\ && \hspace{-5em} \nonumber +\frac{2 L_{\sigma, 1}^{\mathrm{cl}}\,(\lambda+1+\kappa)^{\sigma+(p+2)/(2p)}}{(1-\kappa)^{\sigma+(p+2)/(2p)}}\,
\sum_{(\Lambda+\kappa)^{(p+2)/(2p)}<\alpha_n<\frac{1}{(\gamma_p-\lambda)^{(p+2)/(2p)}}\left(\!
 \frac{2(\Lambda+\kappa)}{1-\kappa}\right)^{(p+2)/(2p)}}\alpha_n^{2p\sigma/(p+2)}\,,
\end{eqnarray}
 % ------------- %
where $\#\{\cdot\}$ means the cardinality of the corresponding set.

Using the same technique one obtains estimates for operators $\widetilde{L}_{n, p}^{(1)},\,L_{n, p}^{(2)},\,\widetilde{L}_{n, p}^{(2)}$ analogous to (\ref{tr.est.}). Finally, the operator $L_p^0$ can be estimated from below by
 % ------------- %
$$
H_{0, p}=-\frac{\partial^2}{\partial x^2}-\frac{\partial^2}{\partial y^2}-2^{p/(p+2)}\lambda \alpha_1^{2p/(p+2)}\quad\text{on}\quad G_0
$$
 % ------------- %
with Neumann  conditions at the boundary $\partial G_0$. The spectrum of $H_{0, p}$ is
 % ------------- %
$$
\left\{\frac{\pi^2 k^2}{4\alpha_1^2}+\frac{\pi^2 m^2}{4\alpha_1^2}-2^{p/(p+2)}\lambda \alpha_1^{2p/(p+2)}\right\}_{k, m=0}^\infty\,,
$$
 % ------------- %
and therefore
 % ------------- %
\begin{eqnarray}
\lefteqn{ \hspace{-5.5em} \nonumber \mathrm{tr}\left(\Lambda-H_{0, p}\right)_+^\sigma\le\sum_{k, m=0}^\infty\left(\Lambda+2^{p/(p+2)}\lambda \alpha_1^{2p/(p+2)}-\frac{\pi^2 k^2}{4\alpha_1^2}-\frac{\pi^2 m^2}{4\alpha_1^2}\right)_+^\sigma} \\ && \nonumber \hspace{-5em} \le\left(\Lambda+2^{p/(p+2)}\lambda \alpha_1^{2p/(p+2)}\right)^\sigma \\ && \nonumber \hspace{-5em} \times \sum_{k=0}^{2\alpha_1\sqrt{\Lambda+2^{p/(p+2)}\lambda \alpha_1^{2p/(p+2)}}/\pi}\left(\frac{2 \alpha_1}{\pi}\biggl(\Lambda+2^{p/(p+2)}\lambda \alpha_1^{2p/(p+2)}-\frac{\pi^2 k^2}{4\alpha_1^2}\right)^{1/2}+1\biggr) \\ && \hspace{-5em} \label{L_0} \le\left(\frac{2\alpha_1}{\pi}\left(\Lambda+2^{p/(p+2)}\lambda \alpha_1^{2p/(p+2)}\right)^{1/2}+1\right)^2\,\left(\Lambda+
2^{p/(p+2)}\lambda \alpha_1^{2p/(p+2)}\right)^\sigma\,.
\end{eqnarray}
 % ------------- %
Consider now $\alpha_1$ is defined in (\ref{alpha2}) and to any $\nu=\alpha_1, \alpha_1+1, \alpha_1+2, \ldots\,$ define a finite sequence of numbers by $\beta_k(\nu)=\nu+\frac{k}{\left[\nu^{p/(p+2)} \ln\nu\right]},\,k=0, 1,\ldots,\left[\nu^{p/(p+2)}\,\ln\nu\right]-1$. This allows us to construct a sequence $\{\alpha_n\}_{n=1}^\infty$ of the rectangle edge coordinates using the following prescription: the first term is given by (\ref{alpha2}) and the further ones are $\alpha_2=\beta_1(\alpha_1),\,\ldots,\alpha_{\left[\alpha_1^{p/(p+2)}\,\ln \alpha_1\right]} =\beta_{\left[\alpha_1^{p/(p+2)}(\alpha_1)\,\ln \alpha_1\right]-1},\,\,\alpha_{\left[\alpha_1^{p/(p+2)}\,\ln \alpha_1\right]+1}=\beta_0(\alpha_1+1),  \ldots\,$, etc., where $[\cdot]$ as usual denotes the entire part. With this choice of $\{\alpha_n\}_{n=1}^\infty$, one can check that the right-hand side of (\ref{tr.est.}) is not larger than
 % ------------- %
\begin{eqnarray}
\nonumber C_{p, \sigma}\biggl(\frac{\left(\Lambda+\kappa\right)^\sigma}{(\gamma_p-\lambda)^\sigma}\,\max\left\{0,\:
\frac{(\Lambda+\kappa)^{(p+1)/p}}{(\gamma_p-\lambda)^{(p+1)/p}}\,\ln\left(\frac{2(\Lambda+\kappa)}{(1-\kappa)(\gamma_p-\lambda)}\right)\right\}
\\\label{final.in.} \quad +\left(\Lambda+\kappa\right)^{\sigma+1/2+1/p}\,\max\left\{0,\:
\left(\Lambda+\kappa\right)^{1/2}\,\ln\left(\Lambda+\kappa\right)\right\}\biggr)
\end{eqnarray}
 % ------------- %
with a constant depending on $p$ and $\sigma$ only. On the other hand, the right-hand side of (\ref{L_0}) is not larger than
 % ------------- %
$$
\tilde{C}_{p, \sigma}\alpha_1^2\left(\Lambda+\alpha_1^{2p/(p+2)}\right)^{\sigma+1}
$$
 % ------------- %
with another constant $\tilde{C}_{p, \sigma}$.  In this way the theorem is established.

\end{proof}

\section*{Acknowledgments}
We are obliged to Ari Laptev for a useful discussion. The research has been supported by the Czech Science Foundation (GA\v{C}R) within the project 14-06818S. D.B. acknowledges the support of the University of Ostrava and the project ``Support of Research in the Moravian-Silesian Region 2013''.
The research of A.K. is supported by the German Research Foundation through  CRC 1173 ``Wave phenomena: analysis and numerics''.

\section*{References}

\end{document}